\documentclass[runningheads]{llncs}
\usepackage[T1]{fontenc}
\usepackage{graphicx, amsmath}
\usepackage{algorithm}
\usepackage{algpseudocode}
\usepackage{enumitem}
\usepackage{tikz}
\usepackage[appendix=inline]{apxproof}
\newtheoremrep{theorem}{Theorem}
\newtheoremrep{lemma}{Lemma}
\begin{document}
\title{Temporal Role Colouring}
%
%
\author{Jessica Enright \inst{1}\orcidID{0000-0002-0266-3292} \and
Kitty Meeks \inst{1}\orcidID{0000-0001-5299-3073} \and
Puck Rombach\inst{2}\orcidID{0000-0002-8374-0797} \and Ella Yates\inst{1} \orcidID{0009-0000-7979-0401}}
\authorrunning{J. Enright et al.}
%
\institute{University of Glasgow 
\email{\{jessica.enright,kitty.meeks\}@glasgow.ac.uk; e.yates.1@research.gla.ac.uk}
\and
University of Vermont
\email{puck.rombach@uvm.edu}\\
}
\maketitle              
\begin{abstract}
A \textit{role colouring} of a graph $G$ is an assignment of colours to the vertices of $G$ such that two vertices of the same 
colour have identical sets of colours in their neighbourhoods.  This model is used to capture the idea of vertices having roles in a contact network, consistent with the set of roles of their contacts.  
We define an extension of the role colouring problem to temporal graphs.
Temporal roles are defined via an automaton with states and transitions capturing both the current colour of a vertex and information about its current and past adjacencies. 
 
We show, by a reduction from the static problem, that the temporal role colouring problem is NP-complete. To contend with this intractability, we explore several parameterisations. We give fixed-parameter tractability results with respect to the number of states of the automaton combined with either the vertex-interval-membership width or the tree-interval-membership width of the temporal graph. We further show the problem is in FPT parameterised by the treewidth of the underlying graph, the lifetime of the temporal graph and the number of colours combined.

\keywords{role colouring  \and temporal graphs \and parameterised algorithms}
\end{abstract}
\section{Introduction}
Vertices representing different individuals in a network often have distinct roles within the network. For example a network of people at a restaurant might include customers, wait staff, managers and kitchen staff, as illustrated in Figure~\ref{fig:Roles example}. The role of an individual in this network determines which other individuals it may interact with: for example a customer will interact with at least one member of wait staff but not directly with the kitchen staff. 

\begin{figure}
    \centering
    \begin{tikzpicture}[A/.style={circle,draw,fill=white,minimum size= 7mm},B/.style={circle,fill} ]
    \node[A](C) at (0,0){C};
    \node[A](Se) at (2,0) {W};
    \node[A](M) at (3,2) {M};
    \node[A](A) at (4.4,-0.4){};
    \node[A](K) at (4,0) {K};
    \draw (C)--(Se)--(M)--(K)--(Se);

    \node[B](M1) at (9.5,3){};
    \node[B](K1) at (10.5,0){};
    \node[B](K2) at (10.5,1){};
    \node[B](S1) at (8.5,-0.5){};
    \node[B](S2) at (8.5,0.5){};
    \node[B](S3) at (8.5,1.5){};
    \node[B](C1) at (6,-1){};
    \node[B](C2) at (6.5,0){};
    \node[B](C3) at (6,1){};
    \node[B](C4) at (6.5,2){};
    \node[B](C5) at (6,3){};
    \draw(C5)--(S3)--(M1)--(K1)--(K2)--(M1)--(S2)--(C3)--(S3)--(K2);
    \draw (C1)--(S1)--(C2);
    \draw(C2)--(S2)--(K2);
    \draw(K1)--(S1)--(M1);
    \draw (C4)--(S3);

    \draw[dotted,rounded corners=5mm] (5.5,3.5)--(5.5,-1.5)--(7,-1.5)--(7,3.5)--cycle;
    \node[] () at (6,2.2){C};
    \draw[dotted,rounded corners=5mm] (8,2)--(8,-1.5)--(9,-1.5)--(9,2)--cycle;
    \node[] () at (8.5,-1){W};
    \draw[dotted,rounded corners=5mm] (10,1.5)--(10,-1.5)--(11,-1.5)--(11,1.5)--cycle;
    \node[] () at (10.5,-1){K};
    \draw[dotted,rounded corners=5mm] (9,3.5)--(10.5,3.5)--(10.5,2.5)--(9,2.5)--cycle;
    \node[] () at (10,3) {M};

\end{tikzpicture}
    \caption{A role graph $R$ representing roles in a restaurant with vertices $\{C,K,M,W\}$ representing customers, kitchen staff, managers and wait staff respectively and a graph that is $R$-role colourable.}
    \label{fig:Roles example}
\end{figure}

This idea is formalised by the notion of role colouring.
A \textit{role colouring} of a graph is an assignment of colours (which correspond to roles) to the vertices such that two vertices are the same colour if they have exactly the same set of colours in their neighbourhoods. 
The study of roles originates  from sociology where they were used to examine specific functions and interactions within social groups \cite{borgatti2024analyzing,lorrain1971structural}. They have also been used to study online social networks \cite{golder2004social}, biological networks \cite{luczkovich2003defining} and technological networks \cite{scripps2007node}.
White and Reitz \cite{white1983graph} formalised the structural approach to the problem in terms of graph homomorphisms.
The concept of a \textit{role  graph} has also been studied, that is a graph $R$ that defines what an acceptable role colouring looks like. A graph $G$ is \textit{$R$-role colourable} if there is a colouring $c$ of the vertices such that $c(N_G(v))=N_R(c(v))$ for every vertex $v \in V(G)$, where $N_G$ and $N_R$ denote the neighbourhoods in $G$ and $R$ respectively. 
The $R$-role colouring problem is defined as follows.\\

\noindent
\textsc{$R$-role colouring}\\
\textbf{Input:} Graphs $G=(V_G,E_G)$ and $R=(V_R,E_R)$\\
\textbf{Question:} Is there a function $r:V_G \rightarrow V_R$ such that for all $u\in V_G: r(N_G(u))=N_R(r(u))$.\\

The \textsc{$R$-role colouring} problem is NP-complete for any simple connected graph $R$ on at least three vertices \cite{fiala2005complete}. A related problem, $k$-role colouring, asks if there is \emph{any} simple connected graph $R$ on $k$ vertices such that a graph $G$ is $R$-role colourable.   $k$-role colouring is NP-hard even when $G$ is a planar graph but can be solved in polynomial time when restricted to trees \cite{purcell2015complexity}.

The existing literature assumes that roles and hence types of interaction are fixed over time but this may not be realistic: it is known that modern online social networks are highly dynamic \cite{kumar2006structure}. In our earlier toy example a customer may apply for a job on the wait staff and eventually be promoted to manager meaning their interactions also evolve over time. To model this we introduce the notion of temporal role colouring. Here the interactions are given as a temporal graph in which edges are subject to discrete changes over time and instead of a static role graph we use a role automaton to capture the permitted interactions and transitions between roles.

In the remainder of this section we introduce the necessary concepts and notation relating to both temporal graphs and automata to allow us to give a formal definition of our temporal problem, before summarising our main results.

\paragraph*{Temporal graphs and parameters}
A temporal graph $\mathcal{G}$ consists of a static graph, sometimes called the \emph{underlying graph} $G_\downarrow=(V,E)$ and an assignment of times $\lambda:V \rightarrow 2^\Lambda$ where $\Lambda$ is the latest time at which an edge is active, called the \emph{lifetime}. We refer to the set $\mathcal{E}$ given by all pairs $(e,t)$ with $t \in \lambda(e)$ as the \emph{time edges}.
 The static graph at time $t$ with vertices $V$ and all time edges active at time $t$ is called the \emph{snapshot} of $\mathcal{G}$ at time $t$, also written as $G_t$. We use the notation $N^t(v)$ to refer to the neighbourhood of a vertex $v$ at time $t$.

 Even problems that can be solved in polynomial time on static graphs are often intractable on temporal graphs. This has led to the study of restrictive temporal parameters, including vertex-interval-membership width \cite{bumpus2023edge}.

 \begin{definition}
    %
    The \emph{vertex-interval-membership sequence} of the temporal graph $\mathcal{G}$ is the sequence $(U_t)_{t\in[\Lambda]}$ of vertex-subsets of $\mathcal{G}$ where $U_t:=\{v\in V(G):\exists i,j \in [\Lambda] \text{ and } uv,wv \in E(G) \text{ such that }
    i\in \lambda(uv), j\in\lambda(wv) \text{ and } i\leq t\leq j\}$. Then the \emph{vertex-interval-membership width} is $\omega(\mathcal{G}):=\max_{t\in \Lambda}|U_t|$.  
\end{definition}

We also use an extension of this, the \textit{tree-interval-membership width}, formally defined in Section $4$.
 
\paragraph*{Automata}
In general, we use notation and algorithmic results on automata from \cite{esparza2023automata}.   An automaton consists of states and transitions between them, and gives traces through the states according to an input string. 

In our automata there are some cases where a state may change spontaneously with no input read; in these cases the transition is called an \textit{$\varepsilon$-transition} and labelled with the empty word $\varepsilon$.
More formally this kind of automaton can be defined as follows.
\begin{definition}[\cite{esparza2023automata}]
    A \emph{nondeterministic automaton} with $\varepsilon$-transitions is a tuple $A=(Q,\Sigma,\delta,Q_o,F)$, where
\begin{itemize}
\item $Q$ is a nonempty set of states
\item $\Sigma$ is the input alphabet
   \item $\delta:Q \times (\Sigma \cup \{\varepsilon\}) \rightarrow \mathcal{P}(Q)$
\item $q_0 \in Q$ is the initial state
\item $F \subseteq Q$ is the set of final states.
\end{itemize}
\end{definition}
We sometimes refer to the transitions as triples $(q_i,a,q_j)$ to mean that $q_j \in \delta(q_i,a)$ where $\delta$ is the transition function of our automaton. 
Following Algorithm $13$ in \cite{esparza2023automata} we assume here that checking if a transition $(q_i,a,q_j)$ exists can be done in time $\mathcal{O}(|Q|)$. We refer to the number of states $|Q|$ as $\rho$. We also use the assumptions given in \cite{esparza2023automata} that, for a given state $q$, checking if $q=q_0$ and if $q\in F$ can be done in constant time.

\paragraph{Formal problem definition} While static role colouring uses a graph to describe the required relationships between colours, for temporal role colouring we use an automaton to encode the colours (recall these correspond to roles) and transitions between them. States of the automaton and colours are not synonymous: each state has a colour, but it may also encode other information such as what colours vertices in the state have already seen.
We use a nondeterministic automaton to allow modelling of situations such as a vertex being able to change colours at any time after seeing specific neighbourhood colours, rather than requiring change immediately.  
In our earlier restaurant example we may think of a customer joining the wait staff only after having been interviewed by a manager: then we may have a state for customers and another state for customers who have been interviewed but both would be considered to have the same customer role (or colour in the automaton).

\begin{definition}[$k$-colour role automaton]
For an integer $k$,
$A$ is a \emph{$k$-colour role automaton} if it is a finite automaton with an additional colouring of states $c$. Specifically it is a tuple containing:
\begin{itemize}
    \item An input alphabet $\Sigma =\mathcal{P}([k])$
    \item A set $Q$ of states
    \item A colouring of states, $c:Q\to \{0,..,k\}$
    \item A set $F \subseteq Q$ of accepting states
    \item A start state $q_0$ such that $c(q_0)=0$
    \item A transition function $\delta: Q \times \Sigma \rightarrow \mathcal{P}(Q)$ where all transitions from $q_0$ are $\varepsilon$-transitions and there are no $\varepsilon$-transitions from any other state.
\end{itemize}
\end{definition}
Given a temporal graph $\mathcal{G}=(G,\lambda)$ with lifetime $\Lambda$ and a $k$-colour role automaton $A$ that includes colouring $c$ we say that $\mathcal{G}$ \textit{admits a temporal role colouring} by $A$ if there exists a mapping $r:(V(G)\times [\Lambda+1])\rightarrow [k]$ such that for each vertex $v$ the word $ r(N^1(v))r(N^2(v))...r(N^\Lambda(v))$ is an accepting word for $A$ with states $q_1,...q_{\Lambda+1}$ such that $r(v,t)=c(q_t)$. Here $N^t(v)$ is the set of neighbours of $v$ at time $t$, so the word is a sequence of sets of colours. 
This leads us to a temporal equivalent of the static problem.\\

\noindent
\textsc{temporal role colouring}\\
\textbf{Input:} Temporal graph $\mathcal{G}=(G,\lambda)$, a $k$-colour role automaton $A$.\\
\textbf{Question:} Does $\mathcal{G}$ admit a temporal role colouring by $A$?\\

\paragraph{Our contribution}
We show that \textsc{temporal role colouring} is NP-complete in general but also derive fpt-algorithms for several restricted cases. 
In Section $3$ we give an fpt-algorithm for the decision problem parameterised by the vertex-interval-membership width and the number of states in the automaton, via dynamic programming over the vertex-interval-membership sequence.
In Section $4$ we show that our problem has the necessary properties to apply a meta-algorithm to solve the problem in FPT time when parameterised by the tree-interval-membership width and the number of states in the automaton.
Finally, in Section $5$ we give an fpt-algorithm parameterised by the underlying treewidth of the graph, the number of colours $k$ used by the automaton and the lifetime $\Lambda$ of the temporal graph; this is achieved by dynamic programming over a nice tree decomposition.

\section{NP-Completeness}
 In this section we give a reduction from the static \textsc{R-role colouring} problem to show that the \textsc{temporal role colouring} problem is also NP-complete.

 
\begin{theoremrep}
\textsc{temporal role colouring} is NP-complete.   
\end{theoremrep}
\begin{proof}
First we show that the problem belongs in NP. Given a colouring of $\mathcal{G}$, the word consisting of the colours of the neighbours at each time for a given vertex can be generated in time $\mathcal{O}(\Lambda n)$. 
Algorithm $13$ of \cite{esparza2023automata} tells us that an accepting word can be checked in time $\mathcal{O}(\Lambda \rho^2)$. Combining these gives us that a yes instance can be verified in time $\mathcal{O}(n^2\Lambda^2\rho^2)$, this is polynomial in the input, therefore \textsc{temporal role colouring} is in NP.

We show NP-hardness by a reduction from the static problem \textsc{$R$-role colouring}. Consider an instance $(G,R)$ of \textsc{$R$-role colouring} such that $G=(V_G,E_G)$ and $R=(V_R,E_R)$ where $|V_R|=k$; we can assume the vertex set $V_R=[k]$ without loss of generality. We construct a role automaton $A_R$ from the role graph $R$ by setting:
\begin{itemize}
    \item Input alphabet $\Sigma = 2^{[k]}$
    \item A start state $q_0$
    \item A set of states $Q=\{q_0,q_{1,s},q_{1,a},q_{2,s},q_{2,a},...,q_{k,s},q_{k,a}\}$,
 that is, the start state and two states for each vertex in $R$
    \item A colouring of states $c:Q \rightarrow \{0,..,k\}$ defined as $c(q_0)=0$ and $c(q_{i,a})=c(q_{i,s})=i$
    \item A set $F=\{q_{1,a},q_{2,a},...q_{k,a}\}$ of accepting states
    \item A transition function $\delta$ with $\delta(q_0,\varepsilon)=q_{i,s}$ for each $i \in [k]$ and $\delta(q_{i,s},N_R(i))=q_{i,a}$ for each $i \in [k]$.
\end{itemize}
We define the temporal graph $\mathcal{G}$ as the temporal graph with lifetime $1$ such that the snapshot $G_1=G$.
This is clearly a polynomial-time construction of a temporal role colouring instance.
We claim that $(\mathcal{G},A_R)$ is a yes instance of \textsc{temporal role colouring} if and only if $(G,R)$ is a yes instance of \textsc{$R$-role colouring}.

Let $(\mathcal{G},A_R)$ be a yes instance of \textsc{temporal role colouring} with the colouring $r:V_G \times[2] \rightarrow k$.
We consider an arbitrary vertex $v \in V_G$. We let $r_1$ be function such that $r_1(v)=r(v,1)$.
As $(\mathcal{G},A_R)$ is a yes instance we know the word $r(N^1(v))=r_1(N_G(v))$ is an accepting path in the automaton $A_R$ with states $q_x$ and $q_y$ such that $r(v,1)=c(q_x)$ and $r(v,2)=c(q_y)$. This means we must have a transition $\delta(q_x,r_1(N_G(v)))=q_y$. By our construction of the transition function this gives us that for some $i$ we must have that $q_x=q_{i,s}$, $q_y=q_{i,a}$ and $r_1(N_G(V))=N_R(i)$. We know from the construction that $c(q_{i,s})=i$, therefore we have that $r_1(v)=i$. As $v$ was arbitrary we have for all $v \in V_G$ that $r_1(N_G(v))=N_R(r_1(v))$ which is exactly the requirement for $(G,R)$ to be a yes instance of \textsc{$R$-role colouring}.

Now let $(G,R)$ be a yes instance of \textsc{$R$-role colouring}, therefore there is a function $r: V_G \rightarrow V_R$ such that for all $v \in V_G$ we have $ r(N_G(v))=N_R(r(u))$. 
We define the temporal colouring $s: V_G \times [2] \rightarrow [k]$ as $s(v,t)=r(v)$ for all $v \in V_G$, $t \in [2]$.
Now fix $v$ as an arbitrary vertex of $V_G$.
As $V_R=[k]$ there is some $i \in [k]$ such that $r(v)=i$ and hence $r(N_G(v))=N_R(i)$. 
By construction of the automaton an accepting word for $v$ is given by $N_R(i)$, which corresponds to the path through the states $q_{i,s}$, $q_{i,a}$ where $c(q_{i,s})=c(q_{1,a})=i$.
 Therefore, for all $v\in V_G$ we have that $r(N_G(v))=s(N^1(v))$ is an accepting word in the automaton with states $q_{i,s}$ and $q_{i,a}$ such that $s(v,1)=c(q_{i,s})$ and $s(v,2)=c(q_{i,a})$, which is exactly the requirement for $(\mathcal{G},A_R)$ to be a yes instance of \textsc{temporal role colouring}.
Hence, the \textsc{temporal role colouring} problem is NP-complete.\qed
\end{proof} 

\begin{toappendix}

\begin{figure}[H]
    \centering
    \begin{tikzpicture}[A/.style={circle,draw,fill=white,minimum size= 7mm},B/.style={circle,fill}, C/.style={circle,draw,minimum size=10mm}, D/.style={circle,draw,minimum size=11mm} ]
    \node[A](C) at (0,0){C};
    \node[A](Se) at (2,0) {W};
    \node[A](M) at (3,2) {M};
    \node[A](A) at (4.4,-0.4){};
    \node[A](K) at (4,0) {K};
    \draw (C)--(Se)--(M)--(K)--(Se);

    \node[A](q0) at (6.5,1){$q_0$};
    \node[A](C1) at (8.5,3.4){$q_{c,s}$};
    \node[A]() at (11,3.4){$q_{c,a}$};
    \node[C](C2) at (11,3.4){};
    \node[A](M1) at (8.5,1.8){$q_{w,s}$};
    \node[A]() at (11,1.8){$q_{w,a}$};
    \node[C](M2) at (11,1.8){};
    \node[A](S1) at (8.5,0.2){$q_{k,s}$};
    \node[A]() at (11,0.2){$q_{k,a}$};
    \node[C](S2) at (11,0.2){};
    \node[A](K1) at (8.5,-1.4){$q_{m,s}$};
    \node[A]() at (11,-1.4){$q_{m,a}$};
    \node[D](K2) at (11,-1.4){};
    \draw[->](5.5,1)->(q0);
    \draw[->](q0)->(C1);
    \draw[->](q0)->(M1);
    \draw[->](q0)->(S1);
    \draw[->](q0)->(K1);
    \draw[->](C1)->(C2);
    \draw[->](M1)->(M2);
    \draw[->](K1)->(K2);
    \draw[->](S1)->(S2);
    \node[]() at (7.5,2.4){$\varepsilon$};
    \node[]() at (7.5,1.55){$\varepsilon$};
    \node[]() at (7.5,0.75){$\varepsilon$};
    \node[]() at (7.5,0){$\varepsilon$};
    \node[] () at (9.75,-1.2){$\{W\}$};
    \node[] () at (9.7,0.4){$\{C,K,M\}$};
    \node[] () at (9.68,2){$\{K,M,W\}$};
    \node[] () at (9.75,3.6){$\{K,W\}$};
    
\end{tikzpicture}
\caption{The role graph $R$ and the automaton $A_R$ constructed by the hardness reduction.}
\end{figure}
\end{toappendix}
\section{Vertex-Interval-Membership Width}
In this section we give an fpt-algorithm for \textsc{temporal role colouring} with respect to the vertex-interval-membership width $\omega$ of the temporal graph $\mathcal{G}$ and the number of states $\rho$ of the automaton $A$. Bumpus and Meeks \cite{bumpus2023edge} give an algorithm which computes $\omega$ in time $\mathcal{O}(\omega\Lambda)$ where $\Lambda$ is the latest time an edge in $\mathcal{G}$ is active.

For each time $i$ with $0 \le i \le \Lambda$, let $Q_i$ denote the set of states in which a vertex may be when it first becomes active at time $i$.
Similarly we use $F_i$ to refer to the set of states from which an inactive vertex may reach an accept state in $i$ time steps (i.e. for all $q \in F_i$ the word $\emptyset^{i}$ reaches an accept state).

\begin{lemmarep}\label{lem:vim sets}
    For a given instance $(\mathcal{G},A)$ the sets $Q_0,...,Q_\Lambda$ and $F_0,...,F_\Lambda$ can be found in time $\mathcal{O}(\rho^2(\Lambda +\rho))$ where $\Lambda$ is the lifetime of the temporal graph and $\rho$ is the number of states in the automaton.
\end{lemmarep}
\begin{proof}
We create these sets of states from the states $Q$ and the transition function $\delta$ of the automaton. We do this by first creating a directed graph with vertices $V=Q$ and the directed edge set $E$ given by $q_i\rightarrow q_j\in E$ if $q_j \in \delta(q_i,\emptyset)$. This can be constructed in time $\mathcal{O}(\rho^3)$ as we check the existence of a transition for $\rho^2$ pairs of states.

We let $Q_0=\{q \in Q:q \in \delta(q_0,\varepsilon)\}$,
for all $i \in [\Lambda]$ we construct $Q_i=\{q:p\rightarrow q \in E \text{ for some } p \in Q_{i-1}\}$. For each $i$ this is clearly the set of states a vertex may reach at time $i$ with the word $\emptyset^i$, this is exactly the set of colours in the neighbourhood for a vertex that has not yet had an edges incident to it, making $Q_i$ the set we claim it to be.

Similarly, we let $F_0=F$ (the set of accepting states from our automaton) and for all $i\in [\Lambda]$ we define $F_i=\{q: q \rightarrow p \in E \text{ for some } p \in F_{i-1}\}$. For each $i$ this is clearly the set of states from which a vertex can reach an accepting state in time $i$ with the word $\emptyset^i$. The is exactly the word composed of sets of colours in the neighbourhood of a vertex that will not have any edges incident to it. $F_i$ is exactly the set we claim it to be.

We can construct all of these sets in time $\mathcal{O}(\Lambda \rho^2)$. Therefore, the total time to construct the directed graph and find the sets $Q_i$s and $F_i$s is $\mathcal{O}(\Lambda\rho^2+\rho^3)$.
\qed \end{proof}

\begin{toappendix}
Let $\mathcal{G}=(V,\mathcal{E})$ and let $A$ be a $k$-colour automaton with states $Q$ and transition function $\delta$, then for the instance $(\mathcal{G},A)$
we consider an assignment at time $t$, $\alpha_t: V \rightarrow Q$, \textbf{valid} on a set of vertices $X \subseteq V$ if there is a colouring $c: V \times [t] \rightarrow [k]$ such that:
\begin{enumerate}
    \item for all vertices $v$ in $X$ the word $c(N^1(v))...c(N^{t-1}(v))$ has a path through $A$ with states $q_1,...q_{t}$ such that $\alpha(v)= q_{t}$
    \item for all vertices $v \in (U_1 \cup U_2 \cup ... \cup U_{t-1}) \setminus U_t$ the word $c(N^1(v))...c(N^{t-1}(v))\emptyset^{\Lambda-t+1}$ has an accepting path through $A$.
    \item if $v \in X \cap ((U_1 \cup U_2 \cup ... \cup U_{t-1})\setminus U_t)$ the accepting path has states $q_1,...,q_t,...,q_\Lambda$ such that $\alpha_t(v)=q_t$
\end{enumerate} 

We say an assignment at time $t+1$ is \textbf{supported} by an assignment at time $t$ if for all $v\in U_t$, $\alpha_{t+1}(v) \in \delta(\alpha_t(v), c(N^t(v))$. 

We now give an algorithm to determine if a state at time $t+1$ is valid and supported by a given valid state at time $t$. To check that it is supported we check that there is a transition between states labelled by the colours of the neighbourhood for every vertex in $U_t$. To check validity we check that all vertices joining the set $(U_1 \cup... \cup U_t)\setminus U_{t+1}$ have accepting words in the automaton.

 \begin{algorithm}
 \caption{}
        \label{alg:vim}
        \begin{algorithmic}[1]
        \Require $\mathcal{G}=(G,\lambda)$ is a temporal graph on $n$ vertices with the vertex-interval-membership sequence $U_{1},...,U_\Lambda$; $A$ is a $k$-colour role automaton with colouring $c$ and transition function $\delta$; $\alpha_{t}$ is a valid assignment of the vertices $U_{t}$ at time $t$; $\alpha_{t+1}$ is an assignment of the vertices $U_{t}$ at time $t+1$; and $F_{\Lambda-t+1}$ is the set of states which accept the word $\emptyset^{\Lambda-t+1}$.
    
    \ForAll{ $v\in U_{t}$}
    \If{$\alpha_{t+1}(v) \notin \delta(\alpha_{t}(v), c(N^t(v)))$}
    \State \Return \textbf{false}
    \EndIf
    \EndFor
    \ForAll{$v \in U_{t} \setminus U_{t+1}$}
    \If{$\alpha_{t+1}(v) \notin F_{\Lambda-t}$}
    \State \Return \textbf{false}
    \EndIf
    \EndFor
    \State \Return \textbf{true}
        \end{algorithmic}
        \end{algorithm}
        
\begin{lemma} \label{lem:algStep}
Algorithm \ref{alg:vim}$(\mathcal{G},A,\alpha_t,\alpha_{t+1},F_{\Lambda-t+1})$ returns true if and only if $\alpha_{t+1}$ is a valid assignment on $U_t$ supported by $\alpha_t$.
\end{lemma}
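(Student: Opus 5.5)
We prove the two directions separately, unpacking the definitions of \emph{supported} and \emph{valid} and matching them against the two loops of Algorithm~\ref{alg:vim}. Throughout we read the lemma as: given that $\alpha_t$ is valid (on $U_t$) witnessed by some colouring $c$ on times $[t]$, the algorithm returns true exactly when $\alpha_{t+1}$ is supported by $\alpha_t$ and is valid at time $t+1$. The witness for $\alpha_{t+1}$ is $c$ extended to time $t+1$ by $c(v,t+1)=c(\alpha_{t+1}(v))$, the colour of the assigned state. The colours $c(N^t(v))$ used in line 2 are those of this fixed colouring, which is what makes the word read at step $t$ well defined.

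\textbf{First loop (support).} For every $v\in U_t$, the first loop checks exactly the condition $\alpha_{t+1}(v)\in\delta(\alpha_t(v),c(N^t(v)))$. So the first loop passes if and only if $\alpha_{t+1}$ is supported by $\alpha_t$; this is immediate from the definition.

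\textbf{Soundness (true implies valid).} Suppose the algorithm returns true, and check the three validity conditions at time $t+1$.
\begin{enumerate}
\item For each $v\in U_t$, validity of $\alpha_t$ gives a path $q_1,\dots,q_t$ reading $c(N^1(v))\cdots c(N^{t-1}(v))$ with $q_t=\alpha_t(v)$. The support check lets us append the transition to $q_{t+1}=\alpha_{t+1}(v)$, which gives condition 1.
\item Vertices already in $(U_1\cup\dots\cup U_{t-1})\setminus U_t$ keep their accepting paths from validity of $\alpha_t$. Each such vertex has no edges at time $t$ or later, so its word is unchanged.
\item Newly retired vertices $v\in U_t\setminus U_{t+1}$ have, by the definition of $U_t$, no incident edges after time $t$. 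Hence their remaining word is $\emptyset^{\Lambda-t}$. The second loop guarantees $\alpha_{t+1}(v)\in F_{\Lambda-t}$, and Lemma~\ref{lem:vim sets} says this state reaches an accept state on exactly that suffix. Concatenating with the prefix path gives an accepting path through $\alpha_{t+1}(v)$ at position $t+1$, which is conditions 2 and 3.
\end{enumerate}

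\textbf{Completeness (valid and supported implies true).} The first loop passes by the equivalence above. For $v\in U_t\setminus U_{t+1}$, condition 3 supplies an accepting path whose state at time $t+1$ is $\alpha_{t+1}(v)$. The suffix of that path reads only empty sets, so $\alpha_{t+1}(v)\in F_{\Lambda-t}$ by the characterisation of the $F_i$ in Lemma~\ref{lem:vim sets}. Hence the second loop passes and the algorithm returns true.

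\textbf{Main obstacle.} The hardest part is index bookkeeping, for two reasons. First, the input is named $F_{\Lambda-t+1}$ while line 6 tests $F_{\Lambda-t}$; I will fix the convention that the suffix after position $t+1$ has length $\Lambda-t$, consistent with words of length $\Lambda$ and states $q_1,\dots,q_{\Lambda+1}$. Second, we need that vertices outside $U_t$ which have never been active are irrelevant until they enter. These vertices are handled by the sets $Q_i$, not by this lemma. I would state both conventions explicitly before carrying out the argument above.
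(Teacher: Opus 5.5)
Your proof is correct and follows the paper's argument essentially step for step. The first loop is exactly the support condition. Soundness appends the support transition to the prefix path from validity of $\alpha_t$, keeps the unchanged all-empty suffix for previously retired vertices, and uses $\alpha_{t+1}(v)\in F_{\Lambda-t}$ for Conditions 2 and 3 on $U_t\setminus U_{t+1}$. Completeness reads $\alpha_{t+1}(v)\in F_{\Lambda-t}$ off the accepting path in Condition 3.

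Your explicit fixing of the witness colouring and of the $F_{\Lambda-t}$ index convention tidies points the paper's proof handles loosely; at one point it writes $\alpha_t(v)\in F_{\Lambda-t+1}$.
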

\begin{proof}
First, we let Algorithm $1$ return true.
Line $3$ would return false if for any vertex $v \in U_t$ the automaton does not have a transition $(\alpha_t(v),c(N^{t}(v)),\alpha_{t+1}(v))$, therefore for all vertices $v \in U_t$ we have that $\alpha_{t+1}(v) \in \delta(\alpha_t,c(N^t(v)))$ and $\alpha_{t+1}$ is supported by $\alpha_t$.
Validity of $\alpha_t$ on $U_t$ tells us that for all $v \in U_t$ there is a path through the automaton labelled $c(N^1(v))...c(N^{t-1}(v))$ through states $q_1,...,q_{t}$ such that $\alpha_{t}(v)=q_{t}$. Combining validity of $\alpha_t$ and the observation that $\alpha_{t+1}$ is supported by $\alpha_t$ gives us that the word $c(N^1(v))...c(N^{t-1}(v))c(N^{t}(v))$ has a path through the automaton with states $q_1,...,q_{t+1}$ such that $\alpha_{t+1}(v)=q_{t+1}$ and Condition $1$ holds.

Similarly, if Algorithm \ref{alg:vim} returns true then line $8$ did not return false and for all $v\in U_t \setminus U_{t+1}$ we have that $\alpha_t(v) \in F_{\Lambda-t+1}$. The set $F_{\Lambda-t}$ is defined as the set of states in $Q$ from which the word $\emptyset^{\Lambda-t+1}$ has a path to an accepting state. As $v \in U_t$ we know from the above argument that the word $c(N^1(v))...c(N^{t}(v))$ has a path through the automaton with states $q_1,...,q_{t+1}$ such that $\alpha_{t+1}(v)=q_{t+1} \in F_{\Lambda-t}$. 
Which gives us that for all $v \in U_t \setminus U_{t+1}$ the word $c(N^1(v))...c(N^{t}(v))\emptyset^{\Lambda-t}$ is accepted by the automaton  as required and that Condition $3$ holds.
Now we consider the set $(U_1 \cup ... \cup U_{t-1}) \setminus U_t$.
From validity of $\alpha_t$ we know that for all vertices $v \in (U_1 \cup ... \cup U_{t-1}) \setminus U_t$ the word $c(N^1(v))...c(N^{t-1}(v))\emptyset^{\Lambda-t+1}$ is accepted. As $v \notin U_t$ we know that $N^t(v)=\emptyset$ and therefore $c(N^t(v))=\emptyset$ which gives us that $c(N^1(v))...c(N^{t-1}(v))\emptyset^{\Lambda-t+1}=c(N^1(v))...c(N^t(v))\emptyset^{\Lambda-t}$ and therefore that the word $c(N^1(v))...c(N^t(v))\emptyset^{\Lambda-t}$ is accepted by the automaton.
We note that $(U_1 \cup ... \cup U_{t-1} \cup U_t)\setminus U_{t+1} = ((U_1 \cup ... \cup U_{t-1}) \setminus U_t) \cup (U_t \setminus U{t+1})$ as from the definition of the vertex-interval-membership sequence no vertex may be in $U_i$ with $i<t$ and $U_{t+1}$ without also being in $U_t$. Therefore, for all vertices $v \in (U_1 \cup ... \cup U_{t-1} \cup U_t)\setminus U_{t+1}$ the word $c(N^1(v))...c(N^t(v))\emptyset^{\Lambda-t}$ is accepted by the automaton, that is that Condition $2$ holds.
Hence, if Algorithm $1$ returns true, the assignment $\alpha_{t+1}$ is valid and supported by $\alpha_t$.

Now we let $\alpha_{t+1}$ be a valid assignment on $U_t$ supported by $\alpha_t$. By the definition of supported the if statement on $2$ is never satisfied and line $3$ does not return false. 
Next, we note that the set given in Condition $3$ is $U_t \cap ((U_1 \cup... \cup U_t)\setminus U_{t+1})=U_t \setminus U_{t+1}$. Hence, for all $v \in U_t \setminus U_{t+1}$ the word $c(N^1(v))...c(N^{t}(v))\emptyset^{\Lambda-t}$ has an accepting path through the automaton with states $q_1,...,q_{t+1},...,q_\Lambda$ such that $\alpha_{t+1}(v)=q_{t+1}$. Therefore, for all $v \in U_t \setminus U_{t+1}$ there is accepting path from the state $\alpha_{t+1}(v)$ labelled by $\emptyset^{\Lambda-t}$, that is $\alpha_{t+1}(v) \in F_{\Lambda-t}$, which means the condition on line $7$ is never satisfied and line $8$ never returns false. Hence, when $\alpha_{t+1}$ is valid and supported by $\alpha_t$ Algorithm $1$ returns true.
\qed \end{proof}

 \begin{algorithm}
 \caption{VIM- width algorithm}
        \label{alg:vim2}
        \begin{algorithmic}[1]
        \Require $\mathcal{G}=(G,\lambda)$ is a temporal graph on $n$ vertices with the vertex-interval-membership sequence $U_{1},...,U_\Lambda$;
        $A$ is a $k$-colour role automaton with states $Q$, colouring $c$ and transition function $\delta$; $Q_0,...,Q_\Lambda$ are possible starting states at each time; and $(F_t)_{0\leq i \leq \Lambda}$ is the sequence of sets of states with the word $\emptyset^i$  accepting.
        
    \State $\mathcal{A}^+_1 \gets\{\alpha_1:U_1 \rightarrow Q_0\}$
    \Comment{All functions assigning vertices in $U_1$ to states in $Q_0$}
    \ForAll{ $i \in [\Lambda-1]$}
    \State $\mathcal{A}_{i+1} \gets \emptyset$
    \State $\Gamma_{i+1} \gets \{\alpha_{i+1}:U_{i} \rightarrow Q\}$
    \ForAll{$\alpha_{i+1} \in \Gamma_{i+1}$}
    \ForAll{$\alpha_{i} \in \mathcal{A}_{i}^+$}
    \If{Algorithm \ref{alg:vim}($\mathcal{G},A,\alpha_{i},\alpha_{i+1},F_{\Lambda-t+1}$)=\textbf{true}}
    \Comment{Check if state is supporting}
    \State Set $\mathcal{A}_{i+1}= \mathcal{A}_{i+1} \cup \{\alpha_{i+1}\}$
    \State Break 
    \Comment{Advance to next $\alpha_{i+1} \in \Gamma_{i+1}$}
    \EndIf
    \EndFor
    \EndFor
    \If {$\mathcal{A}_{i+1}=\emptyset$}
    \State \Return \textbf{false}
    \EndIf
    \State $\mathcal{A}_{i+1}^+\gets  \mathcal{A}_{i+1} \times \{\beta:U_{i+1}\setminus U_{i} \rightarrow Q_i\}$
    \Comment{All combinations with assignments of newly active vertices}
    \EndFor
\ForAll{$\alpha_\Lambda \in \mathcal{A}^+_\Lambda$}
\ForAll {$ v \in U_\Lambda$}
\If {$\delta(\alpha_\Lambda(v),c(N^\Lambda(v))) \cap F_0 =\emptyset$}
\State Break
\Comment{Advance to next $a_\Lambda \in \mathcal{A}^+_\Lambda$}
\EndIf
\EndFor
\State \Return \textbf{true}
\EndFor
\State \Return \textbf{false}

        \end{algorithmic}
        \end{algorithm}

\end{toappendix}

Using Lemma~\ref{lem:vim sets} we derive a dynamic program over the vertex-interval-membership sequence to solve temporal role colouring.
This requires us to show we can assess validity of a possible role colouring assignment in terms of the possibilities at the previous time step efficiently. 

\begin{theoremrep} \label{thm:vim}
    For a temporal graph $\mathcal{G}$ and a $k$-colour role automaton $A$, \textsc{temporal role colouring} on $\mathcal{G}$ is solvable in time $\mathcal{O}(\Lambda\rho^{3\omega+1}\omega^2)$ where $\omega$ is the vertex-interval-membership width of $\mathcal{G}$ and $\rho$ is the number of states of $A$.
\end{theoremrep}
\begin{proof}

We argue that Algorithm \ref{alg:vim2} determines if $(\mathcal{G},A)$ is a yes instance.
We note that line $1$ gives all possible assignments of states to the set $U_1$ at time $0$. 

For all times $t \geq 1$, the for loops starting on lines $5$ and $6$ check for each possible assignment of the vertices in $U_t$ if there is a valid assignment of $U_t$ at time $t$ such that Algorithm \ref{alg:vim} returns true.
We know from Lemma \ref{lem:algStep} that Algorithm \ref{alg:vim} returns true if and only if $\alpha_{t+1}$ is a valid assignment on $U_t$ supported by $\alpha_t$, it is then added to the list of valid assignments for time $t+1$. 
All valid assignments on $U_t$ at time $t+1$ must be supported by some valid assignment on $U-t$ set at time $t$: 
from Condition $1$ of validity for each $v\in U_t$ the word $c(N^1(v))...c(N^{t}(v))$ has a path through $A$ with states $q_1,...,q_{t+1}$
it can clearly be seen that for all the conditions to hold they must also hold for the set of assignments $\alpha_1,...,\alpha_t$ given by $\alpha_i(v)=q_i$ for each $v \in U_t$.

Therefore, if after iterating through all valid assignments at time $t$, there is no assignment that supports $\alpha_{t+1}$ we know it is not a valid assignment.
Hence, after iterating through $\Gamma_{t+1}$ we know the set $\mathcal{A}_{t+1}$ is exactly the set of valid functions at time $t+1$ on $U_{t}$. 

The set of vertices $U_{t+1}\setminus U_{t}$ is exactly the vertices starting their active period at time $t+1$. The initial states these vertices may be in is given by $Q_t$. Therefore, we can see that $\mathcal{A}^+_{t+1}$ is exactly the set of valid functions at time $t+1$ on $U_{t+1}$.

At time $\Lambda$ we then have all valid assignments of $U_\Lambda$. For each assignment we check on line $20$ if every vertex has a transition labelled by its neighbourhood to an accepting state. If all vertices in $U_t$ do then we know from validity of the assignment that for every non-isolated vertex the word $c(N^1(v))...c(N^\Lambda(v))$ is accepted by the automaton. For completely isolated vertices we know the word  $c(N^1(v))...c(N^\Lambda(v))$ is accepting if $Q_0 \cap F_\Lambda \neq \emptyset$. If the algorithm returns true, and any isolated vertex is accepted, then the colouring given by the colouring of the states $c$, is the colouring required for $(\mathcal{G},A)$ to be a yes instance.
If not there is no assignment where every vertex reaches an accept state, therefore there is no possible colouring of $\mathcal{G}$ such that all vertices have the required accepting words and $(\mathcal{G},A)$ is a no instance.
Hence, Algorithm \ref{alg:vim2} determines \textsc{temporal role colouring}.

We now consider run time. 
By Lemma \ref{lem:vim sets} constructing the sets $Q_0,...,Q_\Lambda$ and $F_0,...,F_\Lambda$ takes time $\mathcal{O}(\rho^2(\Lambda+\rho))$ where $\rho$ is the number of states of the automaton.

The size of the set $\mathcal{A}^+_t$ is at most $\rho^{2\omega}$, where $\omega$ is the vertex-interval-membership width, as it assigns vertices in $U_t$ and $U_{t+1}$ to states in $Q$. The size of the set $\Gamma_{t+1}$ is at most $\rho^\omega$.
At each time $t \in [\Lambda-1]$, we run Algorithm \ref{alg:vim} at most $\rho^{3\omega}$ times. 
The first loop of Algorithm $1$ iterates over vertices in $U_t$, finds the colours of each neighbourhood and checks if a state is in the transition given, it therefore runs in time $\mathcal{O}(\rho\omega^2)$ as each of the at most $\omega$ vertices has at most $\omega$ neighbours.
The second loop runs in time $\mathcal{O}(\rho\omega)$. Hence, Algorithm \ref{alg:vim} runs in time $\mathcal{O}(\rho\omega^2)$.
This means in Algorithm \ref{alg:vim2} the for loop on lines $2$ to $16$ runs in time $\mathcal{O}(\Lambda\rho^{3\omega+1}\omega^2)$.
The next loop iterates over at most $\rho^{2\omega}$ assignments and for each vertex in $U_\Lambda$ does a comparison of sets each with size at most $\rho$. Therefore this for loop runs in time $\mathcal{O}(\rho^{2\omega+1}\omega)$.

Finally, if there are any isolated vertices we can check they are accepted in time $\mathcal{O}(\rho)$.

This gives us that the total running time is $\mathcal{O}(\rho^2(\Lambda+\rho)+\Lambda\rho^{3\omega+1}\omega^2+\rho^{2\omega+1}\omega+\rho)= \mathcal{O}(\Lambda\rho^{3\omega+1}\omega^2)$.
\qed \end{proof} 

\section {Tree-Interval-Membership Width}
In this section we use definitions and results from Enright et al. \cite{enright2025families} to prove that \textsc{temporal role colouring} is in FPT with respect to the number of states in the automaton and the tree-interval-membership width of the temporal graph $\mathcal{G}$. 
TIM width is a generalisation of vertex-interval-membership width that takes into account connectivity. A TIM decomposition allows us to work on disconnected components of a snapshot simultaneously; this suits our problem as each transition is dependent only on neighbours of vertices, so disconnected components have no effect on each other. 

\begin{definition}[Tree-Interval-Membership Width]
We say a triple $(T,B,\tau)$ is a tree-interval-
membership decomposition (TIM decomposition) of a temporal graph $\mathcal{G}$ with lifetime $\Lambda$ if $T$ is a labelled directed tree, where $B={B(s):s \in V(T)}$ is a collection of subsets of $V(\mathcal{G})$, called bags, and $\tau :V(T) \rightarrow [\Lambda]$ is a function which labels each node with a time, satisfying:
\begin{enumerate}
    \item For all vertices $v\in V(\mathcal{G})$ and times $t\in [\Lambda]$, there exists a unique node $i\in V(T)$ such that $\tau (i)=t$ and $v \in B(i)$.
    \item For all time-edges $(uv,t) \in \mathcal{E}(G)$, there exists a node $i \in V(T)$ such that $\{u,v\}\subseteq B(i)$ and $\tau(i)=t$. 
    \item The set of arcs of $T$ is given by $\{(i,j) : B(i)\cap B(j) \neq  \emptyset \text{ and } \tau (j)= \tau (i)+1\}$.
\end{enumerate}
The \emph{width} of a TIM decomposition is defined to be $ \max \{ |B(s)| :s \in V(T)\}$. The \emph{TIM width} of a temporal graph $\mathcal{G}$ is the minimum $\phi$ such that $\mathcal{G}$ has a TIM decomposition of width $\phi$.
\end{definition}

To parameterise by TIM width we apply a useful meta-algorithm; to do this we need to show our problem fits certain properties.
\begin{definition}[(k,X)-State]
A $(k,X)$-state on a vertex set $V$ is a pair $(l,c)$, where
$l : V \rightarrow X$ is a labelling of the vertices in $V$ using the labels from set $X$, and $c$ is a vector containing $k$ integers each of size at most a polynomial of $|V|$.
\end{definition}

A $(1,X)$-state for our problem has $X=Q$, the set of states in our automaton. We let all our vectors be one-dimensional zero vectors. Then each $(k,X)$-state can be thought of as an assignment $l$ of vertices to states in the automaton. We will use these states to show that the problem fits the properties required to apply the meta-algorithm.

\begin{definition}[(k,X,f)-Component-Exchangeable Temporally Uniform Problem] \label{def:kxf problem}
We say that a decision problem $P$ with input $x = (\mathcal{G},\beta)$ such that $\mathcal{G}$ has lifetime $\Lambda$ is \emph{$(k,X,f)$
component-exchangeable temporally uniform} if and only if there exist:
\begin{enumerate}
    \item a transition algorithm \textbf{Tr} that takes two labellings for the vertices of a connected, static graph $C$ with labels from the set $X$, the graph $C$, and the problem instance, runs in time
at most $f(|C|,x)$, and returns \textbf{true} or \textbf{false},
\item a starting algorithm \textbf{St}, a validity algorithm \textbf{Val}, and a finishing algorithm \textbf{Fin} that all
take a $(k,X)$-state of a connected static graph $C$ and the problem instance, run in time
at most $f(|C|,x)$, and return \textbf{true} or \textbf{false},
\item a vector $\textbf{v}_{upper}$ of $k$ integers, such that $x = (\mathcal{G},\beta)$ is a yes instance of $P$ if and only if there exists a sequence $s_0,...,s_{\Lambda}$ of
$(k,X)$-component states of each snapshot of $\mathcal{G}$ where
\begin{enumerate} 
    \item for each connected component $C_1$ of $G_1$, $\textbf{St}(s_0|_{C_1}, C_1,x) = \textbf{true}$;
\item for each connected component $C_\Lambda$ of $G_\Lambda$, $\textbf{Fin}(s_\Lambda |_{C_\Lambda},C_\Lambda,x) = \textbf{true}$;
\item $\textbf{Tr}(l_{t-1}|_{C_t},l_t|_{C_t},C_t,x) = \textbf{true}$ where $l_i$ is the labelling of vertices of state $s_i$, for all times $1 \leq t \leq \Lambda$ and connected components $C_t$ of $G_t$;
\item $\textbf{Val}(s_t|_{C_t},C_t,x) = \textbf{true}$ for all times $1 \leq t \leq \Lambda$ and connected components $C_t$ of $G_t$; and
\item the sum of vectors satisfies $\Sigma_{0\leq t\leq \Lambda} \Sigma_{C \in C_t} v_{st}(C) \leq \textbf{v}_{upper}$, where $v_{st}$ is the function $v$ in
the $(k,X)$-component state $s_t$ and $\leq$ denotes element-wise vector inequality.
\end{enumerate}
\end{enumerate}
\end{definition}

\begin{theorem}[\cite{enright2025families}]\label{thm:tim}
    Let $x=(\mathcal{G},\beta)$ be an instance of a $(k,X,f)$-component-exchangeable temporally uniform problem $P$ where $\mathcal{G}$ has $n$ vertices and lifetime $\Lambda$. Given a TIM decomposition of $\mathcal{G}$ with width $\phi$, we can determine whether $x$ is a yes-instance of $P$ in time $\mathcal{O}(n\Lambda|X|^{12\phi^3} (3b)^{12 k \phi^ 3} (3\Lambda n)^ {4k} \phi^9k^2f(\phi,x))$, where $\phi$ is the TIM width of $\mathcal{G}$ and $b$ is the maximum absolute value of any entry of a vector in a $(k,X)$-component state of $\mathcal{G}$.
\end{theorem}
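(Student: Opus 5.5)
Since this is a meta-theorem, the natural proof is a dynamic program over the TIM decomposition $(T,B,\tau)$: it guesses labellings bag by bag and accumulates the vector sums. I would first record two structural facts.

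\emph{Components lie inside bags.} By Condition 1, for each time $t$ the bags $\{B(i):\tau(i)=t\}$ partition $V(\mathcal{G})$. By Condition 2, every time-edge $(uv,t)$ has both endpoints in a common bag of time $t$, so uniqueness forces $u$ and $v$ into the same bag. Hence every connected component $C_t$ of $G_t$ is contained in a single bag $B(i)$ with $\tau(i)=t$, and $|C_t|\le\phi$.

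\emph{Interactions follow the arcs of $T$.} The requirements (a)--(e) of Definition~\ref{def:kxf problem} involve the state $s_t$ only through components at time $t$. They involve consecutive times only through \textbf{Tr}, which compares $l_{t-1}$ and $l_t$ on a component $C_t$. Every vertex of such a $C_t\subseteq B(i)$ lies, at time $t-1$, in a unique bag $B(i')$ with $(i',i)$ an arc of $T$. So all constraints are local to single nodes or to arcs of $T$.

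The main step is a bottom-up DP over $T$, rooted arbitrarily and viewed as an undirected tree.
\begin{itemize}
\item \emph{Table entries.} For a node $i$, record a labelling $l_i:B(i)\to X$ and a vector $w\in\mathbb{Z}^k$. The entry is \textbf{true} if the subtree of $i$ admits labellings of all its bags that:
\begin{itemize}
\item agree with $l_i$ on $B(i)$;
\item make \textbf{Val}, \textbf{St} (at the first time) and \textbf{Fin} (at time $\Lambda$) true on every component inside those bags;
\item make \textbf{Tr} true across every arc inside the subtree;
\item have vector contributions summing exactly to $w$.
\end{itemize}
\item \emph{Bounding $w$.} Each entry of $w$ is a sum of at most $(\Lambda+1)n$ numbers of absolute value at most $b$. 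So $w$ ranges over at most $(3\Lambda n b)^k$-type values per coordinate, which matches the $(3b)^{\cdots}(3\Lambda n)^{4k}$ factors.
\item \emph{Local checks at $i$.} Enumerate $l_i$, at most $|X|^{\phi}$ choices. Run \textbf{Val}, \textbf{St} and \textbf{Fin} on each component inside $B(i)$, each at cost $f(\phi,x)$.
\item \emph{Merging a child $j$.} Enumerate compatible child entries $(l_j,w_j)$. When $(i,j)$ or $(j,i)$ is an arc, run \textbf{Tr} on the later-time components, restricted to the vertices shared with the earlier bag. Then add vectors. Vertex overlaps between bags are what can force a joint guess over up to polynomially many ($\phi^2$ or $\phi^3$) vertex-labels, which accounts for the $|X|^{12\phi^3}$ exponent.
\item \emph{Answer.} At the root, answer yes if some entry has $w\le\mathbf{v}_{upper}$ entrywise.
\end{itemize}

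For correctness I would do induction over the subtrees. The key claim is that two different subtrees hanging off a node share no vertex-time pair, by uniqueness in Condition 1. They also share no constraint except through the arc to their parent, by the locality facts above. So any partial solutions that agree on the parent bag glue into a global sequence $s_0,\dots,s_\Lambda$, and conversely any global sequence restricts to table entries. For the running time, the number of nodes is at most $n\Lambda$. Each merge costs (labelling choices)$^2$ $\times$ (vector range)$^2$ $\times\,\mathrm{poly}(\phi,k)\,f(\phi,x)$, which gives the stated bound.

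The main obstacle is the claim that $T$ really behaves like a tree for the gluing argument. A bag at time $t$ may intersect several bags at time $t-1$, and also several at time $t+1$. I must ensure that the \textbf{Tr} checks for a component $C_t$, which need labels from several parent-side bags at once, are done at a single node. I would handle this by performing every \textbf{Tr} check for $C_t\subseteq B(i)$ at node $i$, and by carrying in $i$'s table the previous-time labels $l_{t-1}|_{B(i)}$ as part of the index. The consistency of these carried labels with the actual bags at time $t-1$ is then enforced on each arc separately. This doubling of the labelled set is where the polynomial-in-$\phi$ exponent comes from. Verifying that this local enforcement suffices is the step I expect to need the most care.
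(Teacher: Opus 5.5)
The paper does not prove this statement. It imports it verbatim from \cite{enright2025families} and uses it as a black box, so there is no internal argument to compare against. Your route is self-contained: show that every component of $G_t$ sits inside one time-$t$ bag, then run a tree DP over $T$ that tracks exact vector sums. This is sound, provided you take the version in your last paragraph as the actual algorithm.

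\textbf{The merge step must change.} Drop the earlier bullet that runs \textbf{Tr} on a component ``restricted to the vertices shared with the earlier bag.'' \textbf{Tr} is a black box defined on a whole connected component. A component $C_t\subseteq B(i)$ can also draw its time-$(t-1)$ labels from several neighbouring bags (parent and children) at once. The fix you propose works:
\begin{itemize}
\item Index node $i$ by the pair $(l_t|_{B(i)},l_{t-1}|_{B(i)})$.
\item Do all \textbf{Tr}, \textbf{Val}, \textbf{St} and \textbf{Fin} checks at node $i$.
\end{itemize}

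\textbf{The step you flagged is immediate.} For $v\in B(i)$, its unique time-$(t-1)$ bag $B(i')$ meets $B(i)$ in $v$. So $(i',i)$ is an arc by Condition~3. Agreement on every arc therefore forces the carried labels to equal the true $l_{t-1}$. Since $T$ is a tree and each vertex--time pair lies in exactly one node, partial solutions on different subtrees share nothing except arc constraints to their common neighbour, so they glue.

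\textbf{Two small additions.}
\begin{itemize}
\item At each node you must also enumerate the vector part of the component state for each of the at most $\phi$ components in the bag. This is where a $(2b+1)^{k\phi}$ factor enters.
\item $s_0$ lives only in the carried labels, and vectors, of the time-$1$ nodes.
\end{itemize}

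\textbf{Running time.} You do not need to reverse-engineer the $|X|^{12\phi^3}$ exponent from ``joint guesses.'' Your DP costs roughly $n\Lambda\,|X|^{4\phi}(3b)^{O(k\phi)}(3\Lambda n)^{2k}\,\mathrm{poly}(\phi,k)\,f(\phi,x)$. This already lies within the stated bound, so the theorem follows a fortiori.

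\textbf{What each approach buys.} The citation lets the paper reuse a general framework already established for many problems, but it explains nothing. Your argument is a transparent, self-contained proof, and it gives a much smaller exponent in $\phi$.
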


We now apply this machinery to temporal role colouring. 
\begin{lemma} \label{lem:1Qf state}
   The \textsc{temporal role colouring} problem is a $(1,Q,f(C))$-component-exchangeable temporally uniform problem where $f(C)=\rho|C|^3$.
\end{lemma}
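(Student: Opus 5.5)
We exhibit the four algorithms \textbf{St}, \textbf{Tr}, \textbf{Val}, \textbf{Fin} and the vector $\textbf{v}_{upper}$ required by Definition~\ref{def:kxf problem}, with $k=1$, $X=Q$, and every $(1,Q)$-state $s_t=(l_t,0)$. The idea is that $l_t(v)$ records the automaton state of $v$ at time $t$ (so $l_0$ gives the states $q_1$ after the initial $\varepsilon$-move, and $l_t$ gives $q_{t+1}$). We take $\textbf{v}_{upper}=(0)$; since all vectors are zero, condition (e) holds trivially.

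The algorithms are defined as follows.
\begin{itemize}
\item $\textbf{St}(s_0|_{C},C,x)$ checks that $l_0(v)\in\delta(q_0,\varepsilon)$ for each $v\in C$.
\item $\textbf{Fin}(s_\Lambda|_C,C,x)$ checks that $l_\Lambda(v)\in F$ for each $v\in C$.
\item $\textbf{Val}$ always returns \textbf{true}.
\item $\textbf{Tr}(l_{t-1}|_C,l_t|_C,C,x)$ checks, for every $v\in C$, that $l_t(v)\in\delta\bigl(l_{t-1}(v),\,c(l_{t-1}(N^t(v)))\bigr)$. Here $c(l_{t-1}(\cdot))$ is the colour $r(u,t)$ of a neighbour at time $t$, and $N^t(v)\subseteq C$ because $C$ is a connected component of $G_t$.
\end{itemize}
This locality is the key point. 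The letter read by $v$ at time $t$ depends only on the states of $v$'s neighbours in the snapshot $G_t$, and all of these lie in the same component as $v$. Isolated vertices form singleton components that read $\emptyset$, so no separate treatment of inactive vertices (as with the sets $Q_i,F_i$ of Lemma~\ref{lem:vim sets}) is needed.

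For the running time, \textbf{Tr} works as follows for each of the $|C|$ vertices. It first computes the colour set of the vertex's neighbours in $O(|C|)$ time, or $O(|C|^2)$ if adjacency must be scanned. It then checks the transition in $O(\rho)$ time, using the assumption from \cite{esparza2023automata}. This gives a bound of at most $\rho|C|^3$. \textbf{St} and \textbf{Fin} each cost $O(\rho|C|)$. So $f(C)=\rho|C|^3$ suffices.

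For correctness, suppose a temporal role colouring $r$ exists with accepting runs $q_1^v,\dots,q_{\Lambda+1}^v$ satisfying $r(v,t)=c(q^v_t)$. Setting $l_{t-1}(v)=q^v_t$ makes every component check pass. Conversely, given states satisfying all the checks, define $r(v,t)=c(l_{t-1}(v))$. Then for each vertex, $q_0\xrightarrow{\varepsilon}l_0(v)\to l_1(v)\to\cdots\to l_\Lambda(v)$ is an accepting run on the word $r(N^1(v))\cdots r(N^\Lambda(v))$. This holds because the per-component \textbf{Tr} checks, taken over all components of $G_t$, cover every vertex at every time.

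The main obstacle is the off-by-one indexing, which needs care. The problem uses times $[\Lambda+1]$ with an initial $\varepsilon$-move, while the meta-framework uses states $s_0,\dots,s_\Lambda$ with \textbf{St} evaluated on components of $G_1$. I would handle this by checking that $\textbf{St}$ on components of $G_1$ covers every vertex, so the $\varepsilon$-move into $l_0$ is enforced everywhere. I would also check that \textbf{Tr} at time $t$ uses the snapshot $G_t$ together with the colours encoded by $l_{t-1}$, which matches the definition $r(v,t)=c(q_t)$.
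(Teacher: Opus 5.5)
Your proposal is correct and follows essentially the same route as the paper. It makes the same choices $k=1$, $X=Q$, zero vectors with $\textbf{v}_{upper}=(0)$, and the same \textbf{St}, \textbf{Tr}, \textbf{Val}, \textbf{Fin}, indexing $l_{t-1}(v)=q^v_t$, and locality argument that $N^t(v)$ lies inside the component of $G_t$ containing $v$. Your explicit choice of reading neighbour colours from $l_{t-1}$ in \textbf{Tr} usefully pins down what the paper's notation $c(N_C(v))$ is meant to denote.
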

\begin{proof}
We want to show \textsc{temporal role colouring} is a $(1,Q,f(C))$-component-exchangeable temporally uniform problem where $f(C)=\rho|C|^3$. That is for $k=1$, $X=Q$ and $f(|C|,(\mathcal{G},A))=\rho|C|^3$ the conditions given in Definition~\ref{def:kxf problem} hold.
We first show that conditions $1$ and $2$ hold for our problem by construction.
First, for labellings $l_i,l_j: V(C) \rightarrow Q$ let \textbf{Tr}$(l_i,l_j,C,(\mathcal{G},A))=\textbf{true}$ if for all $v \in C(V)$ there exists a transition $\delta(l_i(v),c(N_C(v)))=l_j(v)$ and \textbf{false} otherwise. Checking the transition function takes time $\rho$ and finding neighbourhoods and their colours takes time $|C|^2$. Hence the time to check the transition function is $\rho|C|^3$ and Condition $1$ holds.

Next, we define \textbf{St}$(l, C,(\mathcal{G},A))$, as \textbf{true} if, for all $v \in V(C)$, we have $\delta(q_0,\varepsilon)=l_0(v)$, and \textbf{false} otherwise. This runs in time $\rho|C|$. We do not need an extra validity check so we let \textbf{Val}$(l,C,(\mathcal{G},A)) =\textbf{true}$ for all inputs. Finally, we define \textbf{Fin}$(l,C,(\mathcal{G},A))=\textbf{true}$ if for all $v\in C: l(v)\in F$, else \textbf{false}. This takes time $|C|$ to check. Therefore, as these all can be checked in time $\rho|C|^3$, Condition $2$ holds. 

Now we consider Condition $3$. We let $\textbf{v}_{upper}$ be the one dimensional zero vector. We want to show that $x=(\mathcal{G},A)$ is a yes instance if and only if there exists a sequence $s_0,...,s_\Lambda$ of $(1,X)$-component states of each snapshot of $\mathcal{G}$ with the properties stated.

First, we suppose there exists a sequence $s_0,..,s_\Lambda$ of $(1,X)$-component states of $\mathcal{G}$ such that:
\textbf{St}$(s_0|_{C_0},C_0,x)=$\textbf{true} for each connected component $C_0$ of $G_0$, \textbf{Fin}$(s_\Lambda|_{C_\Lambda},c_\Lambda,x)=\textbf{true}$ for each connected component $C_\Lambda$ of $G_\Lambda$ and, for all $1\leq t \leq \Lambda$, \textbf{Tr}$(l_{t-1}|_{C_t},l_t|_{C_t},C_t,x) = \textbf{true}$ for each connected component $C_t$ of $G_t$ .
Then for each vertex $v \in V$ by considering the sequence of connected components such that $v \in V(C_i)$ we have that $\delta(q_0,\varepsilon)=l_0(v)$, $\delta(s_\Lambda(v),N^\Lambda(v))\in F$ and, for all $1\leq t \leq \Lambda$, $\delta(l_{t-1}(v),c(N^t_{C_t}(v)))=l_t(v)$.
As each $C_t$ is a connected component we know that $N^t_{C_t}(v)=N^t(v)$. This gives us that the word  $c(N^1(v))c(N^2(v))...c(N^\Lambda(v))$ is accepted by the automaton. This holds for all $v \in V$ and is exactly the requirement for $(\mathcal{G},A)$ to be a yes instance.

Now we show the converse. Let $(\mathcal{G},A)$ be a yes instance. As \textbf{Val}  is always \textbf{true}, Condition $(d)$ holds vacuously. Similarly, Condition $(e)$ holds as all our vectors are zero. 
As it is a yes instance we have a colouring $r:V \rightarrow [k]$ such that for each vertex $v \in V$ there is an accepting path in $A$ given by $r(N^1(v))r(N^2(v))...r(N^\lambda(v))$  with states $q_1,...q_{\Lambda+1}$ such that $r(v,t)=c(q_t)$. We define the states $s_0,..s_\Lambda$ by $l_i(v)=q_{i+1}(v)$ and show the conditions hold. 
For Condition $(a)$ we know that the only transitions from $q_0$ are $\varepsilon$-transitions, and $q_0$ is our start state of the automaton, therefore $\delta(q_0,\varepsilon)=q_1(v)=l_0(v)$ must be a valid transition for all $v\in V$. That is for all $v\in V$ and therefore for all $V(C_1) \subseteq V$ we have that $\textbf{St}(s_0|_{C_1}, C_1,x) = \textbf{true}$.
For Condition $(b)$ we know that for a word to be accepted by an automaton it must end in an accepting state. In $A$ the accepting states are given by $F$, so we know $l_{\Lambda}(v)=q_{\Lambda+1}(v) \in F$ for all $v \in V$. That is for all $v \in V$ and therefore all $V(C_\Lambda) \subseteq V$ we have that \textbf{Fin}$(s_\Lambda|_{C_\Lambda},c_\Lambda,x)=\textbf{true}$. Finally for Condition $(c)$, we consider a vertex $v$ in connected components $C_1,..,C_\Lambda$. We know that for all $v\in V$ the path $r(N^1(v))r(N^2(v))...r(N^\Lambda(v))$ has states $q_1,...,q_{\Lambda+1}$, which means that each transition from state $q_t$ to state $q_{t+1}$ is labelled by the colours of the neighbourhood $N^t(v)$, that is, for all $1\leq t \leq \Lambda$, $\delta(q_t,r(N^t(v)))=q_{t+1}$. As $r(v,t)=c(q_t)$ and all neighbours of $v$ at time $t$ are in its connected component $C_t$ we can rewrite $r(N^t(v))$ as $c(N_{C_t}^t(v))$. Using this and the $l_i(v)=q_{i+1}(v)$ identity gives us that $\delta(l_{t-1}(v),c(N^t_{C_t}(v)))=l_t(v)$ for all times $1 \leq t \leq \Lambda$. As we showed this for an arbitrary $v\in V$ we have for all times $1\leq t \leq \Lambda$ and connected components $C_t$ that $\textbf{Tr}(l_{t-1}|_{C_t},l_t|_{C_t},C_t,x) = \textbf{true}$.
Therefore we have shown that \textsc{temporal role colouring} is a $(1,Q,\rho|C|^3)$-component-exchangeable temporally uniform problem.
\qed \end{proof}

Now we have shown everything we need to apply the meta-algorithm. 
\begin{theorem}
    For a temporal graph $\mathcal{G}$ with $n$ vertices and lifetime $\Lambda$ and a $k$-colour role automaton $A$ with $\rho$ states the \textsc{temporal role colouring} problem on $\mathcal{G}$ is solvable in time $\mathcal{O}((n\Lambda)^5(3\rho)^{12\phi^3}\rho\phi^{12})$ where $\phi$ is the tree-interval-membership width of $\mathcal{G}$.
\end{theorem}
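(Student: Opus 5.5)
The plan is to combine Lemma~\ref{lem:1Qf state} with the meta-theorem, Theorem~\ref{thm:tim}, and then simplify the resulting running time. By Lemma~\ref{lem:1Qf state}, \textsc{temporal role colouring} is $(1,Q,f)$-component-exchangeable temporally uniform with $f(|C|,x)=\rho|C|^3$. So we instantiate Theorem~\ref{thm:tim} with
\[
k=1,\qquad |X|=|Q|=\rho,\qquad f(\phi,x)=\rho\phi^3 .
\]
The first step is to note that all vectors in our $(1,Q)$-states are the one-dimensional zero vector. Hence $b$, the maximum absolute value of a vector entry, is $0$. Since a factor $(3\cdot 0)^{12\phi^3}$ would be degenerate, we take $b\le 1$, which is harmless because the bound is only an upper bound.

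The second step is to substitute these values into the bound of Theorem~\ref{thm:tim}. This gives
\[
\mathcal{O}\bigl(n\Lambda\,\rho^{12\phi^3}\,3^{12\phi^3}\,(3\Lambda n)^{4}\,\phi^9\,\rho\phi^3\bigr).
\]
The factor $\rho^{12\phi^3}\cdot 3^{12\phi^3}$ combines to $(3\rho)^{12\phi^3}$. The factor $(3\Lambda n)^4$ is $\mathcal{O}((n\Lambda)^4)$, since the constant $3^4$ is absorbed. Multiplying by the leading $n\Lambda$ gives $(n\Lambda)^5$. Finally, $\phi^9\cdot\phi^3=\phi^{12}$ and one factor $\rho$ remains from $f$. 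Together these yield exactly $\mathcal{O}((n\Lambda)^5(3\rho)^{12\phi^3}\rho\phi^{12})$.

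The one point needing care is that Theorem~\ref{thm:tim} assumes a TIM decomposition of width $\phi$ is given. I would handle this by stating, as the meta-theorem does, that the running time is relative to a supplied optimal decomposition, or by citing \cite{enright2025families} for computing one within this time. The main obstacle is really checking that $f$ is evaluated at $|C|\le\phi$. Every connected component of a snapshot lies inside a single bag of the decomposition, by conditions 1 and 2 of the TIM definition, so $|C|\le\phi$ and $f(\phi,x)=\rho\phi^3$ is a valid bound for every call to \textbf{Tr}, \textbf{St}, \textbf{Val} and \textbf{Fin}.
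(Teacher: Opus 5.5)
Your proposal is correct and follows the paper's proof exactly. It is a direct instantiation of Theorem~\ref{thm:tim} via Lemma~\ref{lem:1Qf state} with $k=1$, $|X|=\rho$, $b=1$ and $f(\phi,x)=\rho\phi^3$, followed by the same simplification to $\mathcal{O}((n\Lambda)^5(3\rho)^{12\phi^3}\rho\phi^{12})$. Your side remarks are sound and only make explicit what the paper leaves implicit: every component of a snapshot lies in a single bag at that time, so $|C|\le\phi$, and the TIM decomposition is assumed to be supplied.
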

\begin{proof}
    We prove this by direct application of Theorem~\ref{thm:tim}.
    From Lemma~\ref{lem:1Qf state} we know that \textsc{temporal role colouring} is a $(1,Q,f(C))$-component-exchangeable temporally uniform problem where $f(C)=\rho|C|^3$. Therefore we can apply Theorem~\ref{thm:tim}, with $k=1$, $|X|=|Q|=\rho$ and $f(\phi,x)=\rho\phi^3$. We set $b=1$ which is clearly an upper bound on the maximum absolute value of a vector in our algorithm. This
    gives us the result that we can determine whether $x$ is a yes instance in time $\mathcal{O}(n \Lambda \rho^{12\phi^3} (3)^{12\phi^3}(3\Lambda n)^4\phi^9\rho \phi^3)=\mathcal{O}((n\Lambda)^5(3\rho)^{12\phi^3}\rho\phi^{12})$.
\qed \end{proof}

\section{Treewidth}
In this section we give an fpt-algorithm for \textsc{temporal role colouring} parameterised by the number of colours of the automaton, the lifetime of the temporal graph and the treewidth of the underlying graph.

\begin{definition}
    [\cite{cygan2015parameterized}]\label{def:tree_decomp} 
    A \emph{tree decomposition} of a graph $G$ is a pair $(H,\mathcal{D})$ where $H$ is a tree and $\mathcal{D}=\{\mathcal{D}(b):b\in V(H)\}$ is a collection of subsets of $V(G)$, called \emph{bags}, such that:
\begin{enumerate}
    \item For every vertex $v \in V(G)$ there is at least one $b \in V(H)$ such that $v \in \mathcal{D}(b)$.
    \item For every edge $uv \in E(G)$ there exists a node $b \in V(H)$ such that $u,v \in \mathcal{D}(b)$.
    \item For every vertex $v \in V(G)$ the subgraph induced by the set of nodes $\{b: v \in \mathcal{D}(b)\}$ is connected.
\end{enumerate}
The \emph{width} $\omega$ of the tree decomposition is the size of the largest bag minus 1.
\end{definition}
 The \emph{treewidth} is the minimum width over all possible tree decompositions of a graph. We use \emph{nice} tree decompositions, which have restrictions on the relationships between adjacent bags. 
 \begin{definition}[\cite{cygan2015parameterized}]
We call a tree decomposition $(H,\mathcal{D})$ \emph{nice} if all leaves and the root of the tree contain empty bags and all non-leaf nodes are one of three types:
\begin{itemize}
    \item \emph{Introduce node}: a node $b$ with exactly one child $b'$ such that $\mathcal{D}(b)=\mathcal{D}(b')\cup \{v\}$, for some vertex $v \notin \mathcal{D}(b')$.
    \item \emph{Forget node}: a node $b$ with exactly one child $b'$ such that $\mathcal{D}(b)=\mathcal{D}(b')\setminus \{v\}$ for some vertex $v \in \mathcal{D}(b')$.
    \item \emph{Join node}: a node $b$ with exactly two children $b_1$ and $b_2$ such that $\mathcal{D}(b)=\mathcal{D}(b_1)=\mathcal{D}(b_2)$.
\end{itemize}
\end{definition}
 Any tree decomposition can, in polynomial time, be transformed into a nice tree decomposition without increasing the width, and without loss of generality we can assume any nice tree decomposition of width $\omega$ has $\mathcal{O}(\omega n)$ nodes~\cite{cygan2015parameterized}. We now state the main theorem of this section.

\begin{theorem} \label{thm:treewidth}
    For a temporal graph $\mathcal{G}$ and a $k$-colour role automaton $A$, \textsc{temporal role colouring} on instance ($\mathcal{G},A$) is solvable in time $\mathcal{O}(\Lambda n\omega(\rho^2+\omega k)2^{6k(\Lambda+1)(\omega+1)})$ where $\omega$ is the treewidth of the underlying graph $G_{\downarrow}$, $\Lambda$ is the lifetime of $\mathcal{G}$ and $\rho$ is the number of states of $A$.
\end{theorem}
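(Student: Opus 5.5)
We plan a dynamic program over a nice tree decomposition $(H,\mathcal{D})$ of the underlying graph $G_\downarrow$ with width $\omega$ and $\mathcal{O}(\omega n)$ nodes. The running time $2^{6k(\Lambda+1)(\omega+1)}$ suggests the DP state for a bag records, for each bag vertex $v$ and each time $t\in[\Lambda+1]$, three subsets of $[k]$ (or equivalent information): the colour $r(v,t)$, the set of colours already \emph{seen} among forgotten or processed neighbours of $v$ at time $t$, and a ``promised'' target set $S_{v,t}\subseteq[k]$ which is meant to equal the full neighbourhood colour set $r(N^t(v))$. Encoding each of these as a subset of $[k]$ per vertex and time gives at most $2^{3k(\Lambda+1)(\omega+1)}$ states per bag. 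A pair of such states is compared at join nodes, giving the square $2^{6k(\Lambda+1)(\omega+1)}$.

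A partial solution at node $b$ is a colouring of all vertices in the subtree below $b$ over times $[\Lambda+1]$ with the following property. For every vertex already forgotten, the word $r(N^1(v))\cdots r(N^\Lambda(v))$ has an accepting run whose state colours match $r(v,\cdot)$. For every bag vertex, the seen sets equal the colours of its neighbours at each time among the vertices processed so far.

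The transitions are as follows.
\begin{itemize}
\item \textbf{Leaf:} the empty state.
\item \textbf{Introduce $v$:} guess the colours $r(v,t)$ and target sets $S_{v,t}$. Initialise the seen sets of $v$ from its time-edges to current bag vertices; since previously forgotten vertices are never adjacent to $v$, this is complete for them. Add $r(v,t)$ to the seen sets of bag neighbours at time $t$.
\item \textbf{Forget $v$:} require that the seen set equals $S_{v,t}$ for all $t$, which is now final because all neighbours of $v$ have been processed. Then check, using the automaton, that the word $S_{v,1}\cdots S_{v,\Lambda}$ has an accepting run $q_1,\dots,q_{\Lambda+1}$ (with the initial $\varepsilon$-step from $q_0$) whose colours match $r(v,1),\dots,r(v,\Lambda+1)$. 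This is a product-style reachability of $\Lambda$ layers over $\rho$ states, costing $\mathcal{O}(\Lambda\rho^2)$.
\item \textbf{Join:} the children must agree on colours and targets. The seen sets are unioned, after subtracting the contributions of edges inside the bag, which were counted in both children; alternatively, bag-internal edges can be accounted only at introduce nodes, so a plain union suffices. Since a union of sets is idempotent, double counting is harmless anyway.
\end{itemize}
At the empty root we accept if some state is reachable.

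Correctness follows by induction over the nice tree decomposition using the standard separator property: neighbours of a forgotten vertex all lie in its subtree, and the two subtrees of a join node share only the bag. Soundness and completeness are then routine given the invariant above.

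For the running time, there are $\mathcal{O}(\omega n)$ nodes. Join nodes range over pairs of states, giving $2^{6k(\Lambda+1)(\omega+1)}$ pairs, each with $\mathcal{O}(\Lambda\omega k)$ work for the set unions. Forget nodes incur the $\mathcal{O}(\Lambda\rho^2)$ automaton check. Together these give $\mathcal{O}(\Lambda n\omega(\rho^2+\omega k)2^{6k(\Lambda+1)(\omega+1)})$.

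The main obstacle is setting up the invariant so that acceptance is checked exactly once, at the forget node, with complete neighbourhood information. The key difficulty is that automaton states can carry more information than colours, so state runs cannot be fixed locally. I would therefore deliberately keep only colours and target sets in the table and defer the whole nondeterministic run check to forget time. This is what keeps the table independent of $\rho$ in the exponent.
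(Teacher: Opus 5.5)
Your proposal is correct and is essentially the paper's dynamic program. Per bag vertex and time you store a colour and the set of neighbour colours seen so far in the subtree, introduce nodes add colours along time-edges to bag vertices, join nodes take plain unions (the right choice, since ``subtracting'' bag-internal contributions would be unsound for sets), and the full automaton run is checked once at the forget node in $\mathcal{O}(\Lambda\rho^2)$. The only real difference is your target sets $S_{v,t}$, which are redundant because the forget-node check could run directly on the seen sets. The paper keeps only colours and seen sets, giving $2^{2k(\Lambda+1)(\omega+1)}$ configurations per bag, and gets the factor $2^{6k(\Lambda+1)(\omega+1)}$ by testing each parent configuration against every pair of child configurations at a join, whereas you get it from pairs of your larger $2^{3k(\Lambda+1)(\omega+1)}$ tables.
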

We prove this by a dynamic program over a nice tree decomposition in which we compute all possible configurations for each node.
For any node $b\in V(H)$ the configuration of the node (often called state but we rename here for distinction from the states of the automaton) consists of:
\begin{itemize}
    \item A function $\alpha:\mathcal{D}(b) \times [\Lambda+1] \rightarrow [k]$ a colour for each time for each vertex in the bag including a final finishing time.
    \item A function $\beta: \mathcal{D}(b) \times [\Lambda] \rightarrow 2^{[k]}$ a set of colours for each time for each vertex in the bag; this tracks the colours of neighbours seen by each vertex at each time. 
\end{itemize}

We use $\mathcal{D}^+(b)$ to denote all vertices in the tree decomposition that are in the bag $\mathcal{D}(b)$ or bags at nodes below $b$. We consider a configuration valid if there is a colouring $\pi$ of all active and forgotten vertices $\pi: \mathcal{D}^+(b) \times [\Lambda+1] \rightarrow [k] $ such that:
\begin{enumerate} [label=\roman*.]
    \item $\pi|_{\mathcal{D}(b) \times [\Lambda+1]}=\alpha$
    \item for all $v \in \mathcal{D}(b)$ and $t \in [\Lambda]$, $\beta(v,t)=\pi(N^t(v)\cap\mathcal{D}^+(b))$
    \item for all $v \in \mathcal{D}^+(b) \setminus \mathcal{D}(b)$ we have that $ \pi(N^1(v))\pi(N^2(v))...\pi(N^\Lambda(v))$ is an accepting path through the automaton with states $q_1,q_2,...q_{\Lambda+1}$ such that $\pi(v,t)=c(q_t)$ for all $t \in [\Lambda+1]$.
\end{enumerate}
In this case we say the configuration is supported by $\pi$.

We prove Theorem \ref{thm:treewidth} by first considering the empty leaf nodes and then showing the validity conditions of each type of parent node.

\begin{lemmarep}\label{lem:leaf}
    A configuration $(\alpha,\beta)$ is valid at a leaf node if and only if:
    \begin{itemize}
        \item $\alpha$ is the empty function
        \item $\beta$ is the empty function.
    \end{itemize}
\end{lemmarep}
\begin{proof}
Let $(\alpha,\beta)$ be a valid configuration for the empty leaf node, then it is supported by a colouring $\pi:\mathcal{D}^+(b) \times [\Lambda+1] \rightarrow [k]$. As the bag at the leaf node is empty and there is nothing below the node we know that $\mathcal{D}^+(b)=\emptyset$ and $\pi$ is the empty function. Validity Conditions $i$ and $ii$ therefore give us that $\alpha$ and $\beta$ must both also be the empty function.

Now we let $\alpha$ and $\beta$ both be the empty function, we show that when $\pi$ is the empty function it supports $(\alpha,\beta)$ as a valid configuration. Clearly $\pi|_{\emptyset \times [\Lambda+1]}=\alpha$ and Condition $i$ holds. As $\mathcal{D}(b)=\mathcal{D}^+(b) \setminus \mathcal{D}(b)= \emptyset$, Conditions $ii$ and $iii$ hold vacuously. Therefore, $(\alpha,\beta)$ is a valid configuration. 
\qed \end{proof}

\begin{lemmarep} \label{lem:intro node}
    Let $b$ be an introduce node with child $b'$ such that $\mathcal{D}(b) \setminus \mathcal{D}(b')=\{u\}$. The $(\alpha,\beta)$ is a valid configuration for $b$ if and only if there exists a valid configuration $(\alpha',\beta')$ of $b'$ such that: 
    \begin{enumerate}
        \item $\alpha|_{\mathcal{D}(b')\times[\Lambda+1]}=\alpha'$
        \item $\beta(u,t)=\alpha(N^t(u) \cap \mathcal{D}(b))$
        \item for all $v\neq u$ and $t \in[\Lambda]$, $\beta(v,t)=
        \begin{cases}
            \beta'(v,t) \cup \{\alpha(u,t)\} &\text{if } t \in \lambda(uv)\\
            \beta'(v,t) &\text{otherwise}.
        \end{cases} $ 
    \end{enumerate}
\end{lemmarep}
\begin{proof}
Let $(\alpha,\beta)$ be a valid configuration for $b$, then there is a colouring $\pi$ of $\mathcal{D}^+(b)$ such that the validity conditions hold. 
We show that there exists a valid configuration $(\alpha',\beta')$ for $b'$ with the conditions required supported by $\pi'=\pi|_{\mathcal{D}^+(b') \times [\Lambda +1]}$.
Firstly we note that from validity of $(\alpha,\beta)$ we have that $\pi|_{\mathcal{D}(b) \times [\Lambda+1]}=\alpha$.
We let $\alpha'=\alpha|_{\mathcal{D}(b')\times[\Lambda+1]}$, which gives us that
 $\pi|_{\mathcal{D}(b')\times[\Lambda+1]}=\alpha'$ and as $\mathcal{D}(b') \subseteq \mathcal{D}^+(b)$ this gives us that $\pi'|_{\mathcal{D}(b')\times[\Lambda+1]}=\alpha'$.
So validity Condition $i$ holds when $\alpha|_{\mathcal{D}(b')\times[\Lambda+1]}=\alpha'$.

From validity Condition $ii$ of $b$ we know that, for all $v \in \mathcal{D}(b)$, $\beta(v,t)=\pi(N^t(v) \cap\mathcal{D}^+(b))$. 
We define $\beta'(v,t)=\pi'(N^t(v) \cap \mathcal{D}^+(b'))$. We note that $\pi'(N^t(v) \cap\mathcal{D}^+(b'))=\pi(N^t(v) \cap\mathcal{D}^+(b'))$ because $\pi'=\pi|_{\mathcal{D}^+(b') \times [\Lambda +1]}$.
First, we consider all vertex-time pairs $(v,t)$ such that $t \notin \lambda(uv)$. 
Then we have that, for all $t$, $N^t(v) \cap \mathcal{D}^+(b)=N^t(v)\cap \mathcal{D}^+(b')$ which gives us that $\beta'(v,t)=\beta(v,t)$ as required.
Now we consider vertex-time pairs $(v,t)$ such that $t \in \lambda(uv)$. We know from validity Condition $ii$ that $\beta(v,t)=\pi(N^t(v)\cap\mathcal{D}^+(b))$ and as $t \in \lambda(uv)$ we have that $u \in N^t(v) \cap \mathcal{D}^+(b)$. We can rewrite $\beta(v,t)$ as $\pi((N^t(v)\cap\mathcal{D}^+(b))\setminus\{u\}) \cup\{ \pi(u,t)\}= \pi((N^t(v)\cap\mathcal{D}^+(b))\setminus\{u\}) \cup \{\alpha(u,t)\}$.
We know that $\beta'(v,t)=\pi'(N^t(v) \cap \mathcal{D}^+(b'))=\pi(N^t(v) \cap \mathcal{D}^+(b'))$. As $\mathcal{D}(b) \setminus \mathcal{D}(b')=\{u\}$ we have that $\pi(N^t(v) \cap \mathcal{D}^+(b'))=\pi((N^t(v) \cap \mathcal{D}^+(b)) \setminus \{u\})$. 
Therefore, we have $\beta'(v,t)$ with Condition $ii$ holding such that $\beta(v,t)=\beta'(v,t)\cup \{\alpha(u,t)\}$ as required

Finally, we note that validity Condition $iii$ holds as $\mathcal{D}^+(b') \setminus \mathcal{D}(b') = \mathcal{D}^+(b) \setminus \mathcal{D}(b)$.
Hence all our validity conditions hold and $(\alpha',\beta')$ is a valid configuration for $b'$.

Let $(\alpha',\beta')$ be a valid configuration for $b'$ with the conditions stated. Let $\pi'$ be a colouring that supports the validity of the configuration. We will show there is a colouring $\pi^*$ that supports $(\alpha,\beta)$. We know from Condition $1$ that $\alpha|_{\mathcal{D}(b')\times[\Lambda+1]}=\alpha'$ and from validity Condition $i$ that $\pi'|_{\mathcal{D}(b') \times [\Lambda+1]}=\alpha'$ so we construct the colouring $\pi^*$ by $\pi^*(v,t)=\pi(v,t)$ for all $v \in \mathcal{D}^+(b')$ and $\pi^*(u,t)=\alpha(u,t)$.
We note this gives us that $\pi^*|_{\mathcal{D}^+(b') \times [\Lambda+1]}=\pi'$
and that $\pi^*|_{\mathcal{D}(b) \times [\Lambda+1]}=\alpha$ as required.

Next we consider the introduced vertex $u$. We note that $N^t(u) \cap \mathcal{D}(b)=N^t(u)\cap\mathcal{D}^+(b)$ due to the tree decomposition requirement that adjacent vertices have a common bag.
As, from Condition $2$, $\beta(u,t)= \alpha(N^t(u) \cap \mathcal{D}(b))$  we can see that $\beta(u,t)=\pi^*(N^t(u)\cap\mathcal{D}^+(b))$.
For all other vertices we have, from Condition $3$, that $\beta(v,t)=\beta'(v,t) \cup \{\alpha(u,t)\}$ if $t \in \lambda(uv)$ and $\beta'(v,t)$ otherwise. We know from validity Condition $ii$ that $\beta'(v,t)=\pi'(N^t(v)\cap\mathcal{D}^+(b'))$. If $t \notin \lambda(uv)$ then $N^t(v) \cap \mathcal{D}(b)=N^t(v)\cap\mathcal{D}(b')$ and $\pi^*(N^t\cap\mathcal{D}(b))=\pi'(N^t(v)\cap \mathcal{D}(b))$ so $\beta(v,t)=\pi^*(N^t\cap\mathcal{D}(b))$ as required.
If $t \in \lambda(uv)$ then $\beta(v,t)=\beta'(v,t) \cup \{\alpha(u,t)\}$, that is, $\beta(v,t)=\pi'(N^t(v)\cap\mathcal{D}^+(b')) \cup \pi^*|_{\mathcal{D}(b) \times [\Lambda+1]}(v,t)$.
From our construction of $\pi^*$ we can write this as $\beta(v,t)=\pi^*|_{\mathcal{D}^+(b') \times [\Lambda+1]}(N^t(v)\cap\mathcal{D}^+(b')) \cup \{\pi^*|_{\mathcal{D}(b) \times [\Lambda+1]}(u,t)$\} or $\beta(v,t)=\pi^*(N^t(v)\cap\mathcal{D}^+(b')) \cup \{\pi^*(u,t)\}=\beta$. 
Finally, we note that as $t \in \lambda(uv)$ and $u \in \mathcal{D}(b)$ we have that $N^t(v) \cap\mathcal{D}^+(b)=(N^t(v)\cap\mathcal{D}^+(b')) \cup \{u\}$ giving us that $\beta(v,t)=\pi^*(N^t(v)\cap\mathcal{D}^+(b))$ and validity Condition $ii$ holds.

Finally, we note that Condition $iii$ holds as $\mathcal{D}^+(b) \setminus \mathcal{D}(b) = \mathcal{D}^+(b') \setminus \mathcal{D}(b')$. Hence all validity conditions hold and $(\alpha,\beta)$ is a valid configuration for $b$.
\qed \end{proof}

\begin{lemmarep} \label{lem:forget node}
    Let $b$ be a forget node with child $b'$ such that $\mathcal{D}(b)=\mathcal{D}(b')\setminus \{u\}$ for some $u \in \mathcal{D}(b)$. Then $(\alpha,\beta)$ is a valid configuration for $b$ if and only if there exists a valid configuration $(\alpha',\beta')$ of $b'$ such that:
    \begin{enumerate}
        \item $\alpha=\alpha'|_{\mathcal{D}(b) \times [\Lambda+1]}$
        \item $\beta=\beta'|_{\mathcal{D}(b) \times [\Lambda]}$
        \item $\beta'(u,1)\beta'(u,2)...\beta'(u,\Lambda)$ is an accepting path through the automaton with states $q_1,q_2,...q_{\Lambda+1}$ such that $\alpha'(u,t)=c(q_t)$ for all $t \in [\Lambda+1]$.
    \end{enumerate}
\end{lemmarep}
\begin{proof}
    Let $(\alpha,\beta)$ be a valid configuration of $b$ and let $\pi$ be the colouring that supports it. We will show that there exists a valid configuration $(\alpha',\beta')$ for $b'$ with the conditions required also supported by $\pi$.
    As $\mathcal{D}(b') \subseteq \mathcal{D}^+(b)$ we can define $\alpha'$ as the restriction of $\pi$ to $\mathcal{D}(b') \times [\Lambda+1]$. As $\mathcal{D}(b) \subseteq \mathcal{D}(b')$ this gives us that $\alpha=\alpha'|_{\mathcal{D}(b) \times [\Lambda+1]}$. Hence $\alpha'$ is a valid function and Condition $1$ holds.

    We define $\beta'$ by $\beta'(v,t)=\pi(N^t(v)\cap\mathcal{D}^+(b'))$ for all $v \in \mathcal{D}(b)$.
    As $\mathcal{D}^+(b')=\mathcal{D}^+(b)$ this gives us that $\beta=\beta'|_{\mathcal{D}(b) \times [\Lambda]}$. Hence, $\beta'$ is such that both Condition $ii$ and Condition $2$ hold.

    Next, we consider the strictly forgotten vertex $u$. We know that $u \in \mathcal{D}^+(b) \setminus \mathcal{D}(b)$ so from validity Condition $iii$ we know that it has an accepting sequence given by $ \pi(N^1(u))\pi(N^2(u))...\pi(N^\Lambda(u))$ with states $q_1,q_2,...q_{\Lambda+1}$ such that $\pi(u,t)=c(q_t)$. From our definitions of $\beta'$ and $\alpha'$ this gives us that the word $\beta'(u,1)\beta'(u,2)...\beta'(u,\Lambda)$ is accepted by the automaton with states $q_1,...,q_{\Lambda+1}$ such that $\alpha'(u,t)=c(q_t)$ and Condition $3$ holds.
    
Finally, as $\mathcal{D}^+(b')\setminus\mathcal{D}(b') \subset \mathcal{D}^+(b)\setminus\mathcal{D}(b)$,  Condition $iii$ of $(\alpha',\beta')$ is inherited from Condition $iii$ of $(\alpha,\beta)$. Therefore, we have shown there exists a valid configuration $(\alpha',\beta')$ of $b'$ such that our conditions hold.

 Now we show the converse. Let $(\alpha',\beta')$ be a valid configuration for $b'$ supported by the colouring $\pi'$ such that our conditions hold. Then we show there is a valid configuration $(\alpha,\beta)$ for $b$ also supported by $\pi'$.
    From Condition $1$ we have that $\alpha=\alpha'|_{\mathcal{D}(b) \times [\Lambda+1]}$ and from validity of $\alpha'$ we know that $\pi'|_{\mathcal{D}(b') \times [\Lambda+1]}=\alpha'$, therefore we have that $\pi'|_{\mathcal{D}(b) \times [\Lambda+1]}=\alpha$ and Condition $i$ holds.
Next we see that from Condition $2$ we have that $\beta=\beta'|_{\mathcal{D}(b) \times [\Lambda]}$, which immediately gives us that Condition $ii$ holds.

Next, we consider the forgotten vertex $u$. From Condition $3$ we know that $ \beta'(u,1)\beta'(u,2)...\beta'(u,\Lambda)$ is an accepting path through the automaton with states $q_1,q_2,...q_{\Lambda+1}$ such that $\alpha'(u,t)=c(q_t)$, and as $u \notin \mathcal{D}(b)$ we know that all its neighbours are in $\mathcal{D}^+(b')$, therefore $\beta'(u,t)=\pi(N^t(u) \cup\mathcal{D}^+(b))=\pi(N^t(u))$. As $u \in \mathcal{D}(b')$ we have that $\alpha'(u,t)=\pi'(u,t)$. Hence Condition $iii$ holds for $u$. 
We know that $\mathcal{D}^+(b) \setminus \mathcal{D}(b)=\mathcal{D}^+(b') \setminus \mathcal{D}(b') \cup \{u\}$ and from validity of $(\alpha',\beta')$ Condition $iii$ holds for all vertices in $\mathcal{D}^+(b') \setminus \mathcal{D}(b')$, hence it holds for all vertices in $\mathcal{D}^+(b) \setminus \mathcal{D}(b)$.
We have therefore shown that $(\alpha,\beta)$ supported by $\pi'$ meets all three validity requirements, that is there is a valid configuration for $b$ when there is a valid configuration for $b'$ with the conditions as stated.
\qed \end{proof}

\begin{lemmarep} \label{lem:join node}
    Let $b$ be a join node with children $b_1$ and $b_2$ such that $\mathcal{D}(b)=\mathcal{D}(b_1)=\mathcal{D}(b_2)$. Then $(\alpha,\beta)$ is a valid configuration of $b$ if and only if there exist valid configurations $(\alpha_1,\beta_1)$ and $(\alpha_2,\beta_2)$ of $b_1$ and $b_2$ respectively such that:
    \begin{enumerate}
        \item $\alpha=\alpha_1=\alpha_2$
        \item $\beta(v,t)=\beta_1(v,t) \cup \beta_2(v,t)$ for all times $t \in [\Lambda]$ and vertices $v\in \mathcal{D}(b)$.
    \end{enumerate}
\end{lemmarep}
\begin{proof}
  Let $(\alpha,\beta)$ be a valid configuration of $b$ supported by the colouring $\pi$. We will show there exist valid configurations $(\alpha_1,\beta_1)$ and $(\alpha_2,\beta_2)$ for the states $b_1$ and $b_2$ with the conditions stated supported by $\pi_1=\pi|_{\mathcal{D}^+(b_1) \times [\Lambda+1]}$ and $\pi_2=\pi|_{\mathcal{D}^+(b_2) \times [\Lambda+1]}$ respectively.
First we define $\alpha_1$ and $\alpha_2$ by $\alpha_1(v,t)=\alpha_2(v,t)=\alpha(v,t)$ for all $v \in \mathcal{D}(b)$; we can define them this way as $\mathcal{D}(b)=\mathcal{D}(b_1)=\mathcal{D}(b_2)$. This immediately gives us that validity Condition $i$ holds for $\alpha_1$ and $\alpha_2$ supported by $\pi_1$ and $\pi_2$ as all three colouring functions coincide when restricted to $\mathcal{D}(b)\times[\Lambda+1]$.
We define $\beta_1(v,t)=\pi(N^t(v) \cap \mathcal{D}^+(b_1))$ and $\beta_2(v,t)=\pi(N^t(v) \cap \mathcal{D}^+(b_2))$; then, as $\mathcal{D}^+(b)=\mathcal{D}^+(b_1)\cup\mathcal{D}^+(b_2)$ for all $v \in \mathcal{D}(b)$, we have that $\beta(v,t)=\beta_1(v,t) \cup \beta_2(v,t)$.
Validity Condition $ii$ holds for $\beta_1$ and $\beta_2$ by definition.

As $\mathcal{D}(b_1)=\mathcal{D}(b)$ and $\mathcal{D}^+(b_1) \subseteq \mathcal{D}^+(b)$, the set $\mathcal{D}^+(b_1) \setminus \mathcal{D}(b_1)$ is a subset of $\mathcal{D}^+(b) \setminus \mathcal{D}(b)$.
Therefore, for all $v \in \mathcal{D}^+(b_1) \setminus \mathcal{D}(b_1)$ we have that  $ \pi(N^1(v))\pi(N^2(v))...\pi(N^\Lambda(v))$ is an accepting path through the automaton for with states $q_1,q_2,...q_{\Lambda+1}$ such that $\pi(v,t)=c(q_t)$ for all $t \in [\Lambda+1]$. By definition $\pi(v,t)=\pi_1(v,t)$ for all $v \in \mathcal{D}^+(b_1) \setminus \mathcal{D}(b_1)$. By the connectedness property of the tree decomposition we know that $N^t(v) \subseteq \mathcal{D}^+(b_1)$ and therefore $\pi(N^t(v))=\pi_1(N^t(v))$. Hence, Condition $iii$ holds for $(\alpha_1,\beta_1)$.
An equivalent argument shows that Condition $iii$ also holds for $(\alpha_2,\beta_2)$. 
We have therefore shown that there exist valid configurations for $b_1$ and $b_2$ such that our conditions hold.

Let $(\alpha_1,\beta_1)$ and $(\alpha_2,\beta_2)$ be valid configurations for $b_1$ and $b_2$ supported by colourings $\pi_1$ and $\pi_2$ respectively. We will consider an arbitrary parent configuration $(\alpha,\beta)$ such that the conditions of the lemma statement hold, and show that it is indeed a valid configuration for $b$. 
We build a colouring $\pi$ by setting $\pi(v,t)=\pi_1(v,t)$ for all $v \in \mathcal{D}^+(b_1)$ and $\pi(v,t)=\pi_2(v,t)$ for all $v \in \mathcal{D}^+(b_2) \setminus \mathcal{D}^+(b_1)$, we note that $\mathcal{D}^+(b_1) \cap \mathcal{D}^+(b_2)= \mathcal{D}(b_1)=\mathcal{D}(b_2)$ by the connectedness property of tree decompositions. Also by Condition $1$ we know that $\alpha_1=\alpha_2$ hence $\pi(v,t)=\pi_2(v,t)$ for all $v \in \mathcal{D}^+(b_2)$.
From validity Condition $i$ we have that $\pi_1|_{\mathcal{D}(b_1) \times [\Lambda+1]}=\alpha_1$ and from Condition $1$ we know that $\alpha=\alpha_1$. As $\mathcal{D}(b_1) \subseteq \mathcal{D}^+(b_1)$ we have that $\pi_1|_{\mathcal{D}(b_1) \times [\Lambda+1]}=\pi|_{\mathcal{D}(b_1) \times [\Lambda+1]}$, finally using the fact $\mathcal{D}(b_1)=\mathcal{D}(b)$ gives us that $\pi|_{\mathcal{D}(b) \times [\Lambda+1]}=\alpha$ and Condition $i$ holds.

From validity Condition $ii$ we have that $\beta_1(v,t)=\pi_1(N^t(v)\cap\mathcal{D}^+(b_1))$ and $\beta_2(v,t)=\pi_2(N^t(v)\cap\mathcal{D}^+(b_2))$. By our definition of $\pi$ we can rewrite these as $\beta_1(v,t)=\pi(N^t(v)\cap\mathcal{D}^+(b_1))$ and $\beta_2(v,t)=\pi(N^t(v)\cap\mathcal{D}^+(b_2))$.
From Condition $2$ we have that $\beta(v,t)=\beta_1(v,t) \cup \beta_2(v,t)$, which gives us that $\beta(v,t)=\pi(N^t(v)\cap\mathcal{D}^+(b_1)) \cup \pi(N^t(v)\cap\mathcal{D}^+(b_2))$ that is that $\beta(v,t)=\pi((N^t(v)\cap\mathcal{D}^+(b_1)) \cup (N^t(v)\cap\mathcal{D}^+(b_2)))=\pi(N^t(v)\cap(\mathcal{D}^+(b_1)\cup \mathcal{D}^+(b_2)))=\pi(N^t(v) \cap \mathcal{D}^+(b))$ and Condition $ii$ holds as required.

From validity Condition $iii$ we know that all vertices in $\mathcal{D}^+(b_1) \setminus \mathcal{D}(b_1)$ and $\mathcal{D}^+(b_2) \setminus \mathcal{D}(b_2)$ have valid accepting words. We know that $\mathcal{D}(b)=\mathcal{D}(b_1)=\mathcal{D}(b_2)$ and hence have that $\mathcal{D}^+(b) \setminus \mathcal{D}(b)=\mathcal{D}^+(b_1) \setminus \mathcal{D}(b_1) \cup \mathcal{D}^+(b_2) \setminus \mathcal{D}(b_2)$ and therefore Condition $iii$ holds for $(\alpha,\beta)$.

Therefore we have shown that when $(\alpha_1,\beta_1)$ and $(\alpha_2,\beta_2)$ are valid configurations with the conditions required so is $(\alpha,\beta)$.
    
\qed \end{proof}

Now we use these results to show there is an fpt-algorithm for the temporal role colouring problem.
\begin{proof}[Proof of Theorem~\ref{thm:treewidth}]
Given an instance $(\mathcal{G},A)$ of \textsc{temporal role colouring} we first compute a nice tree decomposition in time and then use a standard dynamic programming technique to determine if it is a yes instance by computing the set of valid configurations at each node from the leaves to the root. By our definition of a valid configuration we know that if there is a valid configuration of the empty root node then $\pi$ is a colouring of all vertices such that they have an accepting path through the automaton. Therefore, an instance $(A,\mathcal{G})$ is a yes instance if and only if there is a valid configuration at the root of the tree.

We now consider the running time for this dynamic program.
We assume throughout that we use a data structure that allows us to look up values $\alpha(v,t)$ and $\beta(v,t)$ for any $(v,t)$ in constant time and moreover that each set $\beta(v,t)$ is stored in lexicographic order.
We further assume that our temporal graph encoding (for example as a sequence of adjacency matrices) allows us to check existence of an edge at a given time in constant time.
First we note that the maximum number of valid configurations we would need to consider at any node is $\mathcal{O}(2^{2k(\Lambda+1)(\omega+1)})$, as there are $k^{(\Lambda+1)(\omega+1)}$ possible functions $\alpha$ and $2^{k\Lambda(\omega+1)}$ possible functions $\beta$. 
We consider how long it takes to check validity of each type of node.

\textbf{Leaf nodes:} by Lemma \ref{lem:leaf} validity can clearly be checked in constant time.

\textbf{Introduce nodes:} we check whether a given parent configuration $(\alpha,\beta)$ is valid by determining whether any configuration of its child satisfies the conditions of Lemma~\ref{lem:intro node}.
First, Condition $1$ involves checking equality of the function for all $v,t \in \mathcal{D}(b') \times [\Lambda+1]$, and each check requires constant time so overall this takes time $\mathcal{O}(\Lambda \omega)$.
For Condition $2$ we need to first compute $N^t(u) \cap \mathcal{D}(b)$.  To do this we check if each vertex in the bag is adjacent to $u$, taking time $\mathcal{O}(\omega)$.
Then we are checking equality of $\Lambda$ pairs of sets each of size at most $k$, making the total time to check Condition $2$ $\mathcal{O}(\Lambda k+ \omega)$.
Finally, to check Condition $3$ we consider each vertex-time pair $(v,t)$ and first check in constant time  whether $t \in \lambda(uv)$; if so we compute the set $\beta'(v,t) \cup \{\alpha(u,t)\}$ (maintaining lexicographic order) and check its equality with $\beta(v,t)$ in time $\mathcal{O}(k)$, and otherwise we check equality of $\beta'(v,t)$ and $\beta(v,t)$, again in time $\mathcal{O}(k)$. Overall checking Condition $3$ therefore takes time $\mathcal{O}(\Lambda \omega k)$.
Therefore, checking the conditions for any given child configuration takes time $\mathcal{O}(\Lambda\omega k)$ so determining validity of the parent configuration takes time $\mathcal{O}(\Lambda\omega k2^{2k(\Lambda+1)(\omega+1)})$.

\textbf{Forget nodes:} we check whether a given parent configuration $(\alpha,\beta)$ is valid by determining whether any configuration of its child satisfies the conditions of 
 Lemma~\ref{lem:forget node}. As for introduce nodes, checking Condition $1$ takes time $\mathcal{O}(\Lambda \omega)$.
For Condition $2$ we need to check equality of $\mathcal{O}(\Lambda \omega)$ sorted sets each of size at most $k$, taking time $\mathcal{O}(\Lambda\omega k)$. 
Finally, we consider checking Condition $3$. 
At time $0$ we construct in time $\mathcal{O}(\rho)$ the set of possible states $X_0$ given by $\delta(q_0,\varepsilon)$. At each other time step we construct the set of possible states $X_{t+1}$ as follows:
for each state $q_i$ in $X_t$ (of which there are at most $\rho$) we first check in constant time whether $c(q_i)=\alpha(v,t)$ and if it is we add any new states given by $\delta(q_i,\beta(v,t))$ to $X_{t+1}$, taking time $\mathcal{O}(\rho)$. Constructing $X_{t+1}$ from $X_t$ therefore takes time $\mathcal{O}(\rho^2)$.
This is done for each time, giving us the set $X_\Lambda$ in time $\mathcal{O}(\Lambda \rho^2)$. It then takes at most $\rho$ constant-time checks to see if any of the states in $X_\Lambda$ are in $F$, so the total time to check Condition $3$ is $\mathcal{O}(\Lambda \rho ^2)$. Therefore, checking the conditions for any given child configurations takes time $\mathcal{O}(\Lambda(\rho^2+\omega k))$ so determining validity of the parent configuration takes time $\mathcal{O}(\Lambda(\rho^2+\omega k)2^{2k(\Lambda+1)(\omega+1)})$.

\textbf{Join nodes:} we check whether a given parent configuration $(\alpha,\beta)$ is valid by determining whether any pair of configurations of its children satisfy the conditions of Lemma~\ref{lem:join node}.
Checking Condition $1$ takes time $\mathcal{O}(\Lambda \omega)$.
For Condition $2$ we have to calculate the union of the (sorted) sets $\beta_1(v,t)$ and $\beta_2(v,t)$ and compare this union to $\beta(v,t)$ for each time and vertex, which takes time $\mathcal{O}(\Lambda \omega k)$. Therefore, checking the conditions for any given pair of child configurations takes time $\mathcal{O}(\Lambda\omega k)$ so determining validity of the parent configuration takes time $\mathcal{O}(\Lambda\omega k2^{4k(\Lambda+1)(\omega+1)})$.


We can check validity of a given configuration of any parent node in time $\mathcal{O}(\Lambda(\rho^2+\omega k)2^{4k(\Lambda+1)(\omega+1)})$. Therefore, we can find all valid configurations at a node in time $\mathcal{O}(\Lambda(\rho^2+\omega k)2^{6k(\Lambda+1)(\omega+1)})$.
Recall that we may assume our nice tree decomposition has at most $\mathcal{O}(n\omega)$ nodes\cite{cygan2015parameterized}. 
We can therefore find the valid configurations at the root and therefore determine if $(\mathcal{G},A)$ is a yes instance in time $\mathcal{O}(\Lambda n\omega(\rho^2+\omega k)2^{6k(\Lambda+1)(\omega+1)})$.
 \qed \end{proof}

\section{Discussion and Future Work}
We have described a temporal analogue of role colouring, and provided both hardness and parameterised tractability results.  Because we have explored three sparse parameters, a natural next step is to consider dense but structured temporal graphs such as those with low temporal clique width, temporal modular-width and temporal neighbourhood diversity \cite{enright2024structural}.

Static role colouring is known to have uses varying from analysing food webs \cite{luczkovich2003defining} to improving web search results \cite{scripps2007node}.  We are therefore hopeful that temporal role colouring will be similarly useful in analysing time-varying systems. A first step might be to identify compact role colourings inspired by reality for data-derived temporal networks, such as those  hosted at SocioPatterns \cite{sociopatternsSocioPatternsx2013}.

Another possible direction for future work is the use of role colourability itself as a parameter.  That is, if a temporal graph admits a role colouring with a size $k$ automaton, can we solve otherwise-hard temporal graph problems using algorithms that are fixed parameter tractable by $k$?

Finally, there are graph classes on which static role colouring is tractable.
%
Are there any structurally-defined classes of temporal graphs or restrictions to the automata for which temporal role colouring can be solved efficiently?

\begin{credits}
\subsubsection{\ackname} Jessica Enright and Kitty Meeks are supported by EPSRC grants EP/T004878/1 and EP/V032305/1. 

\subsubsection{\discintname}
The authors have no competing interests to declare that are
relevant to the content of this article. 
\end{credits}

\pagebreak
\bibliographystyle{splncs04}
\bibliography{ref}

\end{document}